\newcommand{\punt}[1]{}
\newcommand{\cmnt}[1]{}
\newtheorem{observation}[theorem]{Observation}
\newcommand{\secref}[1]{Section~\ref{sec:#1}}
\newcommand{\thmref}[1]{Theorem~\ref{thm:#1}}
\newcommand{\lemref}[1]{Lemma~\ref{lem:#1}}
\newcommand{\corref}[1]{Corollary~\ref{cor:#1}}
\newcommand{\defref}[1]{Definition~\ref{def:#1}}
\newcommand{\eqnref}[1]{Eqn(\ref{eq:#1})}
\newcounter{linenumber}
\def\H{\ensuremath{\mathcal{H}}}
\newcommand{\remove}[1]{}
\newcommand{\Wset}{\textit{Wset}}
\newcommand{\Rset}{\textit{Rset}}
\newcommand{\id}[1]{\mbox{\textit{#1}}}
\newcommand{\res}[1]{\mbox{\textbf{#1}}}
\newcommand{\ignore}[1]{}
\newcommand{\op} {operation}
\newcommand{\termop} {terminal operation}
\newcommand{\nonin} {non-interference}
\newcommand{\perm} {permisiveness}
\newcommand{\lo} {\textit{LO}}
\newcommand{\clo} {\textit{CLO}}
\newcommand{\svsls} {\textit{CLO}}
\newcommand{\comm}{\textit{committed}}
\newcommand{\aborted}{\textit{aborted}}
\newcommand{\txns}{\textit{txns}}
\newcommand{\permfn}[1] {\textit{Perm}(#1)}
\newcommand{\nifn}[1] {\textit{NI}(#1)}
\newcommand{\prevA}[2] {\textit{IncAbort}(#1, #2)}
\newcommand{\evts}[1] {evts(#1)}
\newcommand{\ssch} {sub-history}
\newcommand{\lastw} {lastWrite}
\newcommand{\lwrite}[2] {#2.lastWrite(#1)}
\newcommand{\cg}[1] {CG(#1)}
\newcommand{\valid} {valid}
\newcommand{\svvalid} {legal}
\newcommand{\svvalidity} {legality}
\newcommand{\legal} {legal}
\newcommand{\legality} {legality}
\newcommand{\subs}[2]  {\textit{subC}(#2,#1)}
\newcommand{\sub}[2]  {\textit{subC}(#2,#1)}
\newcommand{\shist}[2]  {#2.subhist(#1)}
\newcommand{\applfn}[2]  {applicable(#1)}
\newcommand{\appl}  {applicable}
\newcommand{\tobj} {t-object}
\newcommand{\gchist} {\textit{gComHist}}
\newcommand{\lchist} {\textit{lComHist}}
\newcommand{\thist} {\textit{tHist}}
\newcommand{\gchlock} {\textit{gLock}}
\newcommand{\gseqn}{\textit{gSeqNum}}
\newcommand{\cseq} {\textit{commitSeqNum}}
\newcommand{\rseq} {\textit{readSeqNum}}
\newcommand{\lseq} {\textit{lseq}}
\newcommand{\tryc} {\textit{tryC}}
\newcommand{\trya}{\textit{tryA}}
\newcommand{\rvar} {\textit{readVar}}
\newcommand{\rop} {\textit{rop}}
\newcommand{\wvar} {\textit{writeVar}}
\newcommand{\wop} {\textit{wop}}
\newcommand{\cvar} {\textit{comVar}}
\newcommand{\cop} {\textit{cop}}
\newcommand{\co} {CO}
\newcommand{\opq} {opaque}
\newcommand{\opty} {opacity}
\newcommand{\coop} {co-opaque}
\newcommand{\coopty} {co-opacity}
\newcommand{\sgt} {SGT}
\newcommand{\memop} {memory operation}
\newcommand{\liveset} {liveSet}
\newcommand{\primary} {primary}
\newcommand{\glset} {gIncSet}
\newcommand{\waits} {waitSet}
\newcommand{\dset} {dependSet}
\newcommand{\sstate} {system-state}
\newcommand{\obsolete} {obsolete}
\begin{document}



\bibliographystyle{abbrv}
%

\title{Non-Interference and Local Correctness in Transactional Memory}

\author{
Petr Kuznetsov\inst{1} \and
Sathya Peri\inst{2}
}

\institute{
T\'el\'ecom ParisTech\\ petr.kuznetsov@telecom-paristech.fr \and
IIT Patna\\ sathya@iitp.ac.in
}

\remove{
\authorinfo{Petr Kuznetsov}
           {Telekom Innovation Lab, TU Berlin}
           {petr.kuznetsov@tu-berlin.de}
\authorinfo{Sathya Peri}
           {IIT Patna}
           {sathya@iitp.ac.in}
}

\date{}
\maketitle

\begin{abstract}
Transactional memory promises to make concurrent programming tractable and efficient 
by allowing the user to assemble sequences of actions in atomic \emph{transactions}
with all-or-nothing semantics.
It is believed that, by its very virtue, 
transactional memory
must ensure that 
all \emph{committed} transactions constitute a serial execution respecting
the real-time order. 
In contrast, aborted or incomplete transactions should not ``take effect.'' 
But what does ``not taking effect'' mean exactly? 

It seems natural to expect that aborted or incomplete transactions do
not appear in the global serial execution, and, thus, no committed
transaction can be affected by them.
We investigate another, less obvious, feature of ``not taking effect'' called 
\emph{non-interference}: aborted or incomplete transactions should not force
any other transaction to abort.
In the strongest form of non-interference that we explore in this paper, by removing a subset of
aborted or incomplete transactions from the history, we should not be able to 
turn an aborted transaction into a committed one without violating the
correctness criterion.     

We show that non-interference is, in a strict sense, not
\emph{implementable} with respect to 
the popular criterion of opacity that requires \emph{all} transactions (be they committed, aborted or incomplete) 
to witness the same
global serial execution. 
In  contrast, when we only require \emph{local}
correctness, non-interference is
implementable. 
Informally, a correctness criterion is local if it only 
requires that every transaction can be serialized along with (a subset
of) the transactions committed before its last event 
(aborted or incomplete transactions ignored). 
We give a few examples of local correctness properties, including the
recently proposed criterion of virtual world consistency, and present
a simple though efficient
implementation that satisfies non-interference and \emph{local opacity}.
\end{abstract}

\section{Introduction}
\label{sec:intro}

Transactional memory (TM) 
promises to make concurrent programming efficient and tractable. 
The programmer simply represents a sequence of instructions that
should appear atomic as a speculative \emph{transaction} that may
either \emph{commit} or \emph{abort}.  
It is usually expected that a TM \emph{serializes} all committed
transactions, 
i.e., makes them appear as in some sequential execution.
An implication of this requirement 
is that no committed transaction can read values written by 
a transaction that is aborted or might abort in the future. 
Intuitively, this is a desirable property because it does not allow a write performed within a
transaction to get ``visible'' as long as there is a chance for 
the transaction to abort.
   
But is this all we can do if we do not want aborted or incomplete
transactions to ``take effect''? We observe that there is a more
subtle side of the ``taking effect'' phenomenon that is usually not
taken into consideration. 
An incomplete or aborted transaction  may cause another
transaction to abort. 
Suppose we have an execution in which an aborted transaction $T$
cannot be committed without violating correctness of the execution, 
but if we remove some incomplete or aborted transactions, then $T$ can
be 
committed. 
This property, originally highlighted in~\cite{SatVid:2011:ICDCN, SatVid:2012:ICDCN}, is called \emph{non-interference}. 

Thus, ideally, a TM  must ``insulate''  transactions that are aborted or might
abort in the future from producing any effect, either by affecting reads of other transactions 
or by provoking forceful aborts.  

\vspace{1mm}
\noindent
\textit{Defining non-interference.}
Consider non-interference as a characteristics of an
\emph{implementation}.
A TM implementation $M$ is non-interfering if removing an
aborted or incomplete \emph{not concurrently committing} transaction from a 
\emph{history} (a sequence of events on the TM interface) of $M$ would
still result in a history in $M$. 
We observe that many existing TM implementations that 
employ \emph{commit-time} lock acquisition or version update  (e.g., RingSTM~\cite{spear+:spaa:ringstm:2008},
NOrec~\cite{norec}) 
are non-interfering in this sense. 
In contrast, some \emph{encounter-time} implementations,
such as TinySTM~\cite{tinystm}, are
not non-interfering.   

This paper rather focuses on non-interference as a characteristics of a
\emph{correctness criterion}, which results in a much stronger restriction
on implementations. 
We intend to understand whether this strong notion of
non-interference is achievable and at what cost, which we believe is a challenging theoretical question. 
For a given correctness criterion $C$, a TM implementation $M$ is
$C$-non-interfering if removing an aborted or incomplete transaction from any
history of $M$ does not allow committing another aborted transaction
while still preserving $C$.   
We observe that
$C$-non-interference produces a subset of
\emph{permissive}~\cite{Guer+:disc:2008} with
respect to $C$ histories. This is not difficult to see if we recall
that in a permissive (with respect to $C$) history,
no aborted transaction can be turned into a committed one
while still satisfying $C$.  

In particular, when we focus on \emph{opaque}
histories~\cite{GuerKap:2008:PPoPP,tm-book}, we observe that
non-interference gives a \emph{strict} subset of permissive opaque
histories. Opacity requires that all transactions (be they committed, aborted,
or incomplete) constitute a consistent sequential execution in which
every read returns the latest committed written value.
This is a strong requirement, because it expects every transaction
(even aborted or incomplete) to
witness the same sequential execution. 
Indeed, there exist permissive  opaque histories that do
not provide non-interference: some aborted transactions force other
transactions to abort.   
\begin{figure}[tbph]
  \centering
\includegraphics[scale=0.7]{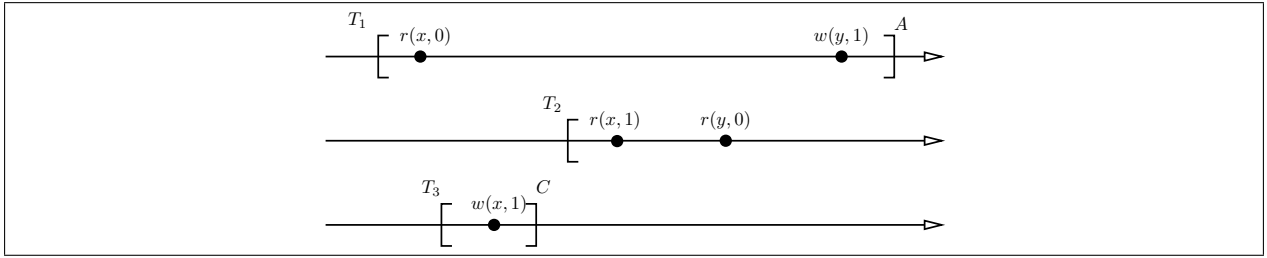}
  \caption{An opaque-permissive opaque but not opaque-non-interfering history: $T_2$ forces $T_1$
    to abort}
  \label{fig:ex1}
\end{figure}

For example, consider the history in Figure~\ref{fig:ex1}. 
Here the very fact that the incomplete operation $T_2$ read the ``new''
(written by $T_3$) value
in object $x$ and the ``old'' (initial) value in object $y$ 
prevents an updating transaction $T_1$ from committing. 
Suppose that $T_1$ commits. Then $T_2$ can only be \emph{serialized} (put in the global
sequential order) after $T_3$ and before $T_1$, while $T_1$ can only be
serialized before $T_3$. Thus, we obtain a cycle which prevents any 
serialization. 
Therefore, the history does not provide opaque-non-interference: by removing
$T_2$ we can commit $T_1$ by still allowing a correct serialization
$T_1,\;T_3$.
But the history is permissive with respect to opacity: no transaction 
aborts without a reason!

This example can be used to show that opaque-non-interference
is, in a strict sense, \emph{non-implementable}. 
Every
opaque permissive implementation that guarantees that every
transactional operation (\id{read}, \id{write}, \id{tryCommit} or \id{tryAbort})
completes if it runs in the absence of concurrency (note that it can
complete with an \emph{abort} response), may be brought to the
scenario above, where the only option for $T_1$ in its last event is \textit{abort}.      

\vspace{1mm}
\noindent
\textit{Local correctness.} 
But are there relaxed definitions of TM correctness that allow for
non-interfering implementations? Intuitively, the problem with the
history  in  Figure~\ref{fig:ex1} is that $T_2$ should be consistent
with a global order of \emph{all} transactions. But what if we only
expect every transaction $T$ to be consistent \emph{locally}, i.e., to fit to
\emph{some} serialization composed of
the transactions that committed before $T$ terminates? 
This way a transaction does not have to account for
transactions that are aborted or incomplete 
at the moment it completes and local serializations for different
transactions do not have to be mutually consistent. 

For example, the history in Figure~\ref{fig:ex1}, assuming that $T_1$
commits, is still \emph{locally} opaque:  
the local serialization of $T_2$ would simply be $T_3\cdot T_2$, while $T_1$
(assuming it commits) and $T_3$ would both be consistent with the
serialization $T_1\cdot T_3$.

In this paper, we introduce the notion of \emph{local correctness}. 
A history satisfies a local correctness property $C$ if
and only if all its ``local sub-histories'' satisfy $C$. Here a local
sub-history corresponding to $T_i$ consists of the events from all transactions that committed before the last event of $T_i$ (transactions that are incomplete or aborted at that moment are ignored) and: (1) if $T_i$ is committed then all its events; (2) if $T_i$ is aborted then all its read \op{s}.
We show that every implementation that is permissive with respect to a local correctness
criterion $C$ is also $C$-non-interfering. 

Virtual world consistency~\cite{ImbsRay:2009:SIROCCO}, that expects
the history to be strictly serializable and  
every transaction to be consistent with 
its causal past, is one example of a local correctness property.
We observe, however, that virtual world consistency may allow a transaction to
proceed even if it has no chances to commit. 
To avoid this useless work, we introduce a slightly stronger 
local criterion that we call \emph{local opacity}. As the name
suggests, a history is locally opaque if each of its local
sub-histories is opaque. 
In contrast with VWC, a locally opaque history, a transaction may only make progress if
it still has a chance to be committed. 

\vspace{1mm}
\noindent
\textit{Implementing conflict local opacity.} 
Finally, we describe a novel TM implementation that is
permissive (and, thus, non-interfering) with respect to 
\emph{conflict} local opacity (CLO). 
CLO is a restriction of local opacity that 
additionally requires each local serialization to be
consistent with the \emph{conflict order}~\cite{Papad:1979:JACM,Hadzilacos:1988:JACM}.   

Our implementation is interesting in its own right for the following
reasons.
First, it ensures non-interference, i.e., no transaction has any
effect on other transactions before committing. Second, it only
requires polynomial (in the number of concurrent transactions) local
computation for each transaction. Indeed, there are indications that,
in general, building a permissive strictly serializable TM may incur non-polynomial
time~\cite{Papad:1979:JACM}.   

The full paper is available as a technical
report~\cite{KP12:TR}. Future work includes focusing on an arguably
more practical notion of non-interference as an implementation
property, in particular, on the inherent costs of implementing non-interference.

\vspace{1mm}
\noindent
\textit{Roadmap.} 
The paper is organized as follows. We describe our system model in
Section~\ref{sec:model}. In Section~\ref{sec:ni} we formally define
the notion of $C$-non-interference, recall the definition of
permissiveness, and relate the two. In Section~\ref{sec:local}, we introduce the notion of
local correctness, show that any permissive implementation of a local
correctness criterion is also permissive,  and define the criterion of conflict local
opacity (CLO).
In Section~\ref{sec:impl} present our CLO-non-interfering implementation.
Section~\ref{sec:conc} concludes the paper with remarks on the related
work and open questions.
The appendix contains omitted definitions and proofs.

\section{Preliminaries}
\label{sec:model}

We assume a system of $n$ processes, $p_1,\ldots,p_n$ that access a
collection of \emph{objects} via atomic \emph{transactions}.
The processes are provided with  four \emph{transactional operations}: the
\textit{write}$(x,v)$ operation that updates object $x$ with value
$v$, the \textit{read}$(x)$ operation that returns a value read in
$x$, \textit{tryC}$()$ that tries to commit the transaction and
returns \textit{commit} ($c$ for short) or \textit{abort} ($a$ for
short), and \textit{\trya}$()$ that aborts the transaction and returns
$A$. The objects accessed by the read and write \op{s} are called as
\tobj{s}. For the sake of presentation simplicity, we assume that the
values written by all the transactions are unique. 

Operations \textit{write}, \textit{read} and \textit{\tryc}$()$ may
return $a$, in which case we say that the operations \emph{forcefully
abort}. Otherwise, we say that the operation has \emph{successfully}
executed.  Each operation specifies a unique transaction
identifier. A transaction $T_i$ starts with the first operation and
completes when any of its operations returns $a$ or $c$. 
Abort and commit \op{s} are called \emph{\termop{s}}. 
For a transaction $T_k$, we denote all its  read \op{s} as $Rset(T_k)$
and write \op{s} $Wset(T_k)$. Collectively, we denote all the \op{s}
of a  transaction $T_i$ as $\evts{T_k}$. 

\vspace{1mm}
\noindent
\textit{Histories.} 
A \emph{history} is a sequence of \emph{events}, i.e., a sequence of
invocation-response pairs of transactional operations. The collection
of events is denoted as $\evts{H}$. For simplicity, we only consider
\emph{sequential} histories here: the invocation of each transactional
operation is immediately followed by a matching response. Therefore,
we treat each transactional operation as one atomic event, and let
$<_H$ denote the total order on the transactional operations incurred
by $H$. With this assumption the only relevant events of a transaction
$T_k$ are of the types: $r_k(x,v)$, $r_k(x,A)$, $w_k(x, v)$, $w_k(x,
v,A)$, $\tryc_k(C)$ (or $c_k$ for short), $\tryc_k(A)$, $\trya_k(A)$ (or $a_k$ for short). 
We identify a history
$H$ as tuple $\langle \evts{H},<_H \rangle$. 

Let $H|T$ denote the history consisting of events of $T$ in $H$, 
and $H|p_i$ denote the history consisting of events of $p_i$ in $H$. 
We only consider \emph{well-formed} histories here, i.e.,
(1) each $H|T$ consists of  a read-only prefix (consisting of read
operations only), followed by a
write-only part (consisting of write operations only), possibly \emph{completed}
with a $\tryc$ or $\trya$ operation\footnote{This restriction brings no loss of 
generality~\cite{KR:2011:OPODIS}.},  and
(2) each $H|p_i$ consists of a sequence of transactions, where no new
transaction begins before the last transaction
completes (commits or aborts). 

We assume that every history has an initial committed transaction $T_0$
that initializes all the data-objects with 0. The set of transactions
that appear in $H$ is denoted by $\txns(H)$. The set of committed
(resp., aborted) transactions in $H$ is denoted by $\comm(H)$
(resp., $\aborted(H)$). The set of \emph{incomplete} transactions
in $H$ is denoted by $\id{incomplete}(H)$
($\id{incomplete}(H)=\txns(H)-\comm(H)-\aborted(H)$). 

For a history $H$, we construct the \emph{completion} of $H$, denoted $\overline{H}$, 
by inserting $a_k$ immediately after the last event 
of every transaction $T_k\in\id{incomplete}(H)$.



\vspace{1mm}
\noindent
\textit{Transaction orders.} For two transactions $T_k,T_m \in \txns(H)$, we say that  $T_k$ \emph{precedes} $T_m$ in the \emph{real-time order} of $H$, denote $T_k\prec_H^{RT} T_m$, if $T_k$ is complete in $H$ and the last event of $T_k$ precedes the first event of $T_m$ in $H$. If neither $T_k\prec_H^{RT} T_m$ nor $T_m \prec_H^{RT} T_k$, then $T_k$ and $T_m$ \emph{overlap} in $H$. A history $H$ is \emph{t-sequential} if there are no overlapping
transactions in $H$, i.e., every two transactions are related by the real-time order.

\cmnt {
We now define two order relations
on the set of transactions that are going to be instrumental in our
further definitions: real-time order, and deferred-update order.

For $T_k,T_m \in \txns(H)$, we say that  $T_k$ \emph{precedes} $T_m$
in the \emph{deferred-update order} of $H$, denote $T_k\prec_{H}^{DU} T_m$, if
$T_k$ contains a read $r_k(x,v)$, $T_m$ is committed,
$x\in\Wset(T_m)$, and $r_k(x,v)<_H c_m$. 
}

\cmnt{
For two transactions $T_k$ and $T_m$ in $\txns(H)$, we say that \emph{$T_k$ precedes $T_m$ in conflict order}, denoted $T_k \prec_H^{CO} T_m$, if (1) w-w order: $\tryc_k(C)<_H \tryc_m(C)$ and $Wset(T_k) \cap Wset(T_m) \neq\emptyset$, (2) w-r order: $\tryc_k(C)<_H r_m(x,v)$ and $x \in Wset(T_k)$, or (3) r-w order: $r_k(x,v)<_H \tryc_m(C)$ and $x\in Wset(T_m)$.

It can be seen that for any history $H$, the real-time and conflict orders are the same for $H$ and $\overline{H}$, i.e. $\prec_H^{RT} = \prec_{\overline{H}}^{RT}$ and $\prec_H^{CO} = \prec_{\overline{H}}^{CO}$. 
}

\vspace{1mm}
\noindent
\textit{Sub-histories.} A \textit{sub-history}, $SH$ of a history
$H$ denoted as the tuple $\langle \evts{SH},$ $<_{SH}\rangle$ and is
defined as: (1) $<_{SH} \subseteq <_{H}$; (2) $\evts{SH} \subseteq
\evts{H}$; (3) If an event of a transaction $T_k \in \txns(H)$ is in $SH$ then all
the events of $T_k$ in $H$ should also be in $SH$.
(Recall that $<_H$ denotes the total order of events in $H$.) 
For a history
$H$, let $R$ be a subset of $txns(H)$, the transactions in
$H$. 
Then $\shist{R}{H}$ denotes  the \ssch{} of $H$ that is
formed  from the \op{s} in $R$.

\vspace{1mm}
\noindent
\textit{Valid and legal histories.} 
Let $H$ be a history and $r_k(x, v)$ be a read {\op} in $H$. A
successful read $r_k(x, v)$ (i.e., $v \neq A$), is said to be
\emph{\valid} if there is a transaction $T_j$ in $H$ that commits
before $r_K$ and $w_j(x, v)$ is in $\evts{T_j}$. Formally, $\langle
r_k(x, v)$  is \valid{} 
$\Rightarrow \exists T_j: (c_j <_{H} r_k(x, v)) \land (w_j(x, v) \in
\evts{T_j}) \land (v \neq A) \rangle$.  The history $H$ is \valid{} 
if all its successful read \op{s} are \valid. 

We define $r_k(x, v)$'s \textit{\lastw{}} to be the latest commit event
$c_i$ such that $c_i$ precedes $r_k(x, v)$ in $H$ and $x\in\Wset(T_i)$
($T_i$ can also be $T_0$). A successful read \op{} $r_k(x, v)$ (i.e.,
$v \neq A$), is said to be \emph{\svvalid{}} if transaction $T_i$
(which contains  $r_k$'s \lastw{}) also writes $v$ onto $x$. Formally,
$\langle r_k(x, v)$ \text{is \svvalid{}} $\Rightarrow (v \neq A) \land
(\lwrite{r_k(x, v)}{H} = c_i) \land (w_i(x,v) \in \evts{T_i})
\rangle$.  The history $H$ is \svvalid{} if all its successful read
\op{s} are \svvalid. Thus from the definitions we get that if $H$ is \svvalid{} then it is also \valid.


\vspace{1mm}
\noindent
\textit{Strict Serializability and Opacity.} 
We say that two histories $H$ and $H'$ are \emph{equivalent} if
they have the same set of events.  
Now a history $H$ is said to be \textit{opaque}
\cite{GuerKap:2008:PPoPP,tm-book} 
if $H$ is \valid{}  and there exists a
t-sequential legal history $S$ such that (1) $S$ is equivalent to
$\overline{H}$ and (2) $S$ respects $\prec_{H}^{RT}$, i.e.,
$\prec_{H}^{RT} \subset \prec_{S}^{RT}$. 
By requiring $S$ being equivalent to $\overline{H}$, opacity treats
all the incomplete transactions as aborted. 

Along the same lines, a \valid{} history $H$ is said to be
\textit{strictly serializable} if $\shist{\comm(H)}{H}$ is opaque.
Thus, unlike opacity, strict serializability does not include aborted
transactions in the global serialization order.

\section{$P$-Non-Interference}
\label{sec:ni}
A \textit{correctness criterion} is a set of histories.
In this section, we recall the notion of \perm{}~\cite{Guer+:disc:2008} and 
then we formally define \nonin{}. 
%
%
First, we define a few auxiliary notions.

For a transaction  $T_i$ in $H$, \emph{\appl{}} events of $T_i$ or $\applfn{T_i}{H}$ denotes: (1) all the events of $T_i$, if it is committed; (2) if $T_i$ is aborted then all the read \op{s} of $T_i$. Thus, if $T_i$ is an aborted transaction ending with $\tryc_i(A)$ (and not $r_i(x,A)$ for some $x$), then the final $\tryc_i(A)$ is not included in $\applfn{T_i}{H}$. 

We denote, $H^{T_i}$ as the shortest prefix of $H$ containing all the events of $T_i$ in $H$. Now for $T_i\in\id{aborted}(H)$, let $\H^{T_i,C}$ denote the set of histories constructed from $H^{T_i}$, where the last operation of $T_i$ in $H$ is replaced with (1) $r_i(x,v)$ for some value non-abort value $v$, if the last operation is $r_i(x,A)$, (2) $w_i(x,v,A)$, if the last operation is $w_i(x,v,A)$, (3) $\tryc_i(C)$, if the last operation is $\tryc_i(A)$.

If $R$ is a subset of transactions of \txns($H$), then $H_{-R}$
denotes the sub-history obtained after removing all the events of $R$
from $H$. 
Respectively, $\H^{T_i,C}_{-R}$ denotes the set of histories in
$\H^{T_i,C}$ with all the events of transaction in $R$ removed.


\cmnt{
We define a variant of $\subs{T_i}{H}$, $\sub{T_i}{H}$ is the of sub-history of $H$ consisting of events from the set of committed transactions of $H$ and the events of $T_i$ \emph{up to} the last event of $T_i$, where, in case $T_i$ is aborted in $H$, the last operation of $T_i$ in $H$ is replaced with $\tryc_i(C)$,
and in case $T_i$ is incomplete in $H$, $\tryc_i(C)$ is added right after the last event of $T_i$. Formally, $\sub{T_i}{H} = \subs{T_i}{H}^{T_i,C}$ (notice that $\sub{T_i}{H}$ is not necessarily a sub-history in $H$, since some events of $H$ are modified).

If $R$ is a subset of transactions of \txns($H$), then $H_{-R}$
denotes the sub-history obtained after removing all the events of $R$
from $H$.  Thus, $\H^{T,\textbf{C}}_{-R}$ denotes the set of histories
obtained by replacing the response of the terminal \op{} of $T$ with a
non-abort value (as shown above) in the sub-history formed by removing
all the events of $R$ from $H$. 
}


\begin{definition}
\label{def:perm}
Given a correctness criterion $P$, we say that a history $H$ is
\emph{$P$-permissive}, and we write $H \in \permfn{P}$ if: 
\begin{enumerate}
\item[(1)] $H \in P$; 
\item[(2)] $\forall T \in \aborted(H)$, $\forall H' \in \H^{T,C}$: $H' \notin P$. 
\end{enumerate}
\end{definition} 
From this definition we can see that a history $H$ is permissive
w.r.t. $P$, if no aborted transaction in $H$ can be turned into
committed, while preserving $P$. 
\cmnt{It must be noted that in the above definition, the aborted
transaction $T$ is treated as read-only. }

The notion of \nonin{} or \nifn{P} is defined in a similar manner 
as a set of histories parameterized by a property $P$. 
For a transaction $T$ in $\txns(H)$, $\prevA{T}{H}$ denotes the set of
transactions that have (1) either aborted before $T$'s \termop{} or
(2) are incomplete when $T$ aborted. Hence, for any $T$,
$\prevA{T}{H}$ is a subset of $aborted(H) \cup incomplete(H)$.

\begin{definition}
\label{def:ni}
Given a correctness criterion $P$, we say that a history $H$ is
\emph{$P$-non-interfering}, and we write $H \in \nifn{P}$ if: 
\begin{enumerate}
\item[(1)] $H \in P$; 
\item[(2)] $\forall T \in \aborted(H)$, $R \subseteq \prevA{T}{H}$, 
$\forall H' \in \H^{T,C}_{-R}$:  $H' \notin P$. 
\end{enumerate}
\end{definition} 
Informally, \nonin{} states that none of transactions that
aborted prior to or are live at the moment when $T$ aborts caused $T$
to abort: removing any subset of these transactions from the history
does not help $t$ to commit.
\cmnt{By considering $\overline{H}$ in condition (2), we treat all
incomplete transactions in $H$ as aborted. It can be seen that $S$ (in
condition (2) above) could be an empty set. }
By considering the special case $R=\emptyset$ in Definition~\ref{def:ni}, we
obtain Definition~\ref{def:perm}, and, thus:
\begin{observation}
For every correctness criterion $P$, 
$\nifn{P}\subseteq \permfn{P}$. 
\end{observation}
%
%
The example in Figure~\ref{fig:ex1} (\secref{intro}) shows 
that $\nifn{\id{opacity}}\neq \permfn{\id{opacity}}$ and, thus, 
no implementation of opacity can
satisfy \nonin. 
This motivated us to define a new correctness
criterion, a relaxation of opacity, 
which satisfies \nonin.

\section{Local correctness and non-interference}
\label{sec:local}

Intuitively, a correctness criterion is local if is enough to ensure
that, for every transaction, the corresponding \emph{local sub-history}
is correct. One feature of any local property $P$ is that any $P$-permissive
implementation is also $P$-non-interfering.  

Formally, for $T_i$ in $\txns(H)$, let $\subs{T_i}{H}$ denote
\[
\shist{\comm(H^{T_i})\cup\{\applfn{T_i}{H}\}}{H^{T_i}},
\]
i.e., the sub-history of $H^{T_i}$ consisting of the events of all committed transactions in $H^{T_i}$ and all the applicable events of $T_i$. We call it local sub-history of $T_i$ in $H$. Note that here we are considering applicable events of $T_i$. So if $T_i$ is committed, all its events are considered. But if $T_i$ is an aborted transaction ending with $\tryc(A)$ (or $r_i(x,A)$), then only its read \op{s} are considered.

\begin{definition}
\label{def:local}
A correctness criterion $P$ is local if for all histories $H$: 
\begin{description}
\item $H\in P$ if and only if , for all $T_i\in\txns(H)$,
  $\subs{T_i}{H}\in P$.
\end{description}
\end{definition} 
As we show in this section,  one example of a local property is virtual world
consistency~\cite{ImbsRay:2009:SIROCCO}.
Then we will introduce another local property that we call conflict local opacity (\clo), in the next section and describe a simple permissive \clo{} implementation.


\begin{theorem}
\label{thm:perm_ni}
For every local correctness property $P$, $\permfn{P}\subseteq \nifn{P}$.
\end{theorem}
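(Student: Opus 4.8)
The plan is to verify the two clauses of \defref{ni} for an arbitrary $H\in\permfn{P}$. Clause~(1), $H\in P$, holds because $\permfn{P}\subseteq P$ by \defref{perm}. For clause~(2) I would argue by contradiction: fix $T\in\aborted(H)$, a set $R\subseteq\prevA{T}{H}$, and a history $H'\in\H^{T,C}_{-R}$, and suppose $H'\in P$. The objective is to produce a history $\hat H\in\H^{T,C}$ with $\hat H\in P$, which contradicts clause~(2) of \defref{perm} applied to $H$.

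Let $\hat H\in\H^{T,C}$ be obtained from $H^{T}$ by exactly the replacement of $T$'s last operation that defines $H'$, so that $H'=\hat H_{-R}$. Since $P$ is local, by \defref{local} it suffices to show $\subs{T_j}{\hat H}\in P$ for every $T_j\in\txns(\hat H)=\txns(H^{T})$. The key observation is that every transaction in $R$ is aborted or incomplete in $H$ (since $\prevA{T}{H}\subseteq\aborted(H)\cup\id{incomplete}(H)$), hence is still not committed in $\hat H$, and therefore contributes no event to the local sub-history $\subs{T_j}{\hat H}$ of any $T_j\notin R$. I then split into two cases. (i)~$T_j=T$ (note $T\notin R$): because the events of $R$ do not occur in $\subs{T}{\hat H}$ in the first place, deleting them (passing from $\hat H$ to $H'$) does not change this local sub-history, so $\subs{T}{\hat H}=\subs{T}{H'}$, which lies in $P$ by locality of $H'$. (ii)~$T_j\neq T$ and $T_j$ is completed in $\hat H$: its terminal operation lies in $H^{T}$ strictly before the last operation of $T$ (the replaced event), hence all events of $T_j$ and the set of transactions committed before $T_j$'s last event coincide in $\hat H$ and in $H$; therefore $\subs{T_j}{\hat H}=\subs{T_j}{H}$, which lies in $P$ by locality of $H$.

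Putting the cases together gives $\hat H\in P$, contradicting $H\in\permfn{P}$, and so $H\in\nifn{P}$. The point I expect to need the most care is the handling of the transactions in $R$ that are still \emph{incomplete} at the moment $T$ aborts: such a transaction occurs \emph{truncated} in $\hat H$ -- its events following $T$'s terminal operation in $H$ are dropped -- so neither its status nor its local sub-history need match those in $H$, and case~(ii) does not literally apply to it. I would deal with this by treating incomplete transactions as aborted via completions, noting that the local sub-history of such a truncated transaction in $\hat H$ coincides with its local sub-history in a suitable prefix of $H$, and invoking the fact that a local correctness criterion behaves well under taking prefixes. Modulo this bookkeeping, the argument is a direct use of locality together with the observation that deleting transactions that are never committed leaves everyone else's local sub-history intact.
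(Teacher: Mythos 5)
Your argument is essentially the paper's own proof: both proceed by contradiction, construct the same history $\widehat H\in\H^{T,C}$ from $H^{T}$ by performing the replacement that defines $H'$, and rely on the key observation that transactions in $R$ are never committed and hence contribute nothing to any relevant local sub-history, so that locality of $H$ (for transactions completing before $T$'s last event) and of $H'$ (for $T$ itself, via $\subs{T}{\widehat H}=\subs{T}{H'}$) give $\widehat H\in P$, contradicting $H\in\permfn{P}$. The truncation issue you flag for transactions still incomplete at $T$'s last event is genuine, but the paper's proof does not treat it either (it verifies only $T$ and the committed transactions before invoking locality), so on that point your write-up is, if anything, the more careful of the two.
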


\begin{proof}
We proceed by contradiction. Assume that $H$ is in
$\permfn{P}$ but not in $\nifn{P}$. 
More precisely, let $T_a$ be an aborted transaction in $H$, 
$R \subseteq \prevA{T_a}{H}$ and  
$\widetilde H \in \H^{T_a,C}_{-R}$, such that  $\widetilde H \in P$.

On the other hand, since $H\in \permfn{P}$, we have  
$\H^{T_a,C} \cap P = \emptyset$.
Since $P$ is local and $H\in P$, we have $\forall T_i\in\txns(P)$, 
 $\subs{T_i}{H}\in P$. 
Thus, for all transactions $T_i$ that
 committed before the last event of $T_a$, we have  
$\subs{T_i}{H}=\subs{T_i}{H^{T_a}}\in P$.    

Now we construct $\widehat H$ as $H^{T_a}$, except that the aborted
operation of $T_a$ is replaced with the last operation of $T_a$ in
$\widetilde H$.  Since $\widetilde H$ is in $P$, and $P$ is local, 
we have $\subs{T_a}{\widehat H} =
\subs{T_a}{\widetilde H}\in P$. 
For all transactions $T_i$ that committed before the last event of $T_a$ in
$\widehat H$, we have  $\subs{T_i}{\widehat H}=\subs{T_i}{H^{T_a}}\in P$.
Hence, since $P$ is local, we have $\widehat H\in P$.
But, by construction, $\widehat H\in\H^{T_a,C}$---a contradiction with
the assumption that $\H^{T_a,C} \cap P = \emptyset$.
\qed
\end{proof}

\noindent
As we observed earlier, for any correctness criterion $P$, $\nifn{P} \subseteq \permfn{P}$. Hence, \thmref{perm_ni} implies that for any local correctness
criterion $P$ $\nifn{P} = \permfn{P}$.

\subsection{Virtual world consistency}
\label{subsec:vwc}

The correctness criterion of \emph{virtual world
  consistency} (VWC)~\cite{ImbsRay:2009:SIROCCO} relaxes opacity by
 allowing aborted transactions to be only consistent with its local
 \emph{causal} past.
More precisely, we say that $T_i$ \emph{causally precedes} $T_j$ in a history
$H$, and we write $T_i\prec_H^{CP} T_j$  if one of the following
conditions hold (1) $T_i$ and $T_j$ are executed by the same process
and   $T_i\prec_H^{RT}T_j$, (2) $T_i$ commits and $T_j$  reads the
value written by $T_i$ to some object $x\in\Wset(T_i)\cap\Rset(T_j)$(recall that we assumed for simplicity that all
written values are unique),   or (3) there exists $T_k$, such that
$T_i\prec_H^{CP} T_k$  and $T_k\prec_H^{CP} T_j$.  
The set of  transactions $T_i$ such that
$T_i\prec_H^{CP}T_j$ and $T_j$ itself is called the \emph{causal past} of
$T_j$, denoted $CP(T_j)$.    

Now $H$ is in \id{VWC} if (1) $\shist{\comm}{H}$ is opaque
and (2) for every $T_i\in\txns(H)$,  $\shist{CP(T_i)}{H}$ is opaque. 
Informally, $H$ must be strictly serializable and the causal past 
of every transaction in $H$ must
constitute an opaque history. 

It is easy to see that $H\in\id{VWC}$ if and only if for all
$\subs{T_i}{H}\in\id{VWC}$. 
By \thmref{perm_ni}, any VWC-permissive implementation is
also VWC-non-interfering.  

\subsection{Conflict local opacity}
\label{subsec:clo}

As shown in~\cite{ImbsRay:2009:SIROCCO}, the $\id{VWC}$ criterion may allow a transaction to proceed if it is ``doomed'' to abort: as long as the transaction's causal past can be properly serialized, the transaction may continue if it is no more consistent with the global serial order and, thus, will have to eventually abort. We propose below a stronger local property that, intuitively, aborts a transaction as soon as it cannot be put in a global serialization order. 

\begin{definition}
\label{def:lo}
A history $H$ is said to be \emph{locally opaque} or \emph{LO},
if for each transaction $T_i$ in $H$:
$\subs{T_i}{H}$ is opaque. 
\end{definition}

It is immediate from the definition that a locally opaque history is
strictly serializable: 
simply take $T_i$ above to be the last transaction to commit  in $H$. 
The resulting $\subs{T_i}{H}$ is going to be $\shist{\comm(H)}{H}$, 
the sub-history consisting of all committed transactions in $H$.     
Also, one can easily see that local opacity is indeed a local property. 

Every opaque history is also locally opaque, but not vice versa.  
To see this, consider the history $H$ in Figure~\ref{fig:ex3} which is like the
history in  Figure~\ref{fig:ex1}, except that transaction $T_1$ is now
committed.
Notice that the history is not opaque anymore: $T_1$, $T_2$ and $T_3$
form a cycle that prevents any legal serialization.
But it is \emph{locally} opaque: 
each transaction witnesses a state which is
consistent with some legal total order on transactions committed so far:
$\subs{T_1}{H}$ is equivalent to $T_3 T_1$, 
$\subs{T_2}{H}$ is equivalent to $T_3 T_2$, 
$\subs{T_3}{H}$ is equivalent to $T_3$.

\begin{figure}[tbph]
  \centering
\includegraphics[scale=0.7]{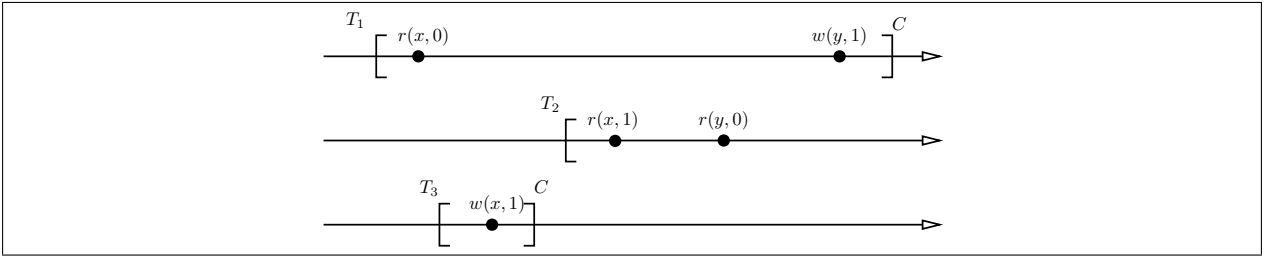}
  \caption{A locally opaque, but not opaque history (the initial value for each
object is $0$)}
  \label{fig:ex3}
\end{figure}

We denote the set of locally opaque histories by $\lo$. Finally, we
propose a restriction of local  opacity that  ensures that every local
serialization respects the \textit{conflict
  order}~\cite[Chap. 3]{WeiVoss:2002:Morg}. 
For two transactions $T_k$ and $T_m$ in $\txns(H)$, we say that \emph{$T_k$
  precedes $T_m$ in conflict order}, denoted $T_k \prec_H^{CO} T_m$, 
if (w-w order) $\tryc_k(C)<_H \tryc_m(C)$ and $Wset(T_k) \cap
Wset(T_m) \neq\emptyset$, 
(w-r order) $\tryc_k(C)<_H r_m(x,v)$, $x \in Wset(T_k)$ and $v \neq
A$, or (r-w order) 
$r_k(x,v)<_H \tryc_m(C)$, $x\in Wset(T_m)$ and $v \neq A$.
Thus, it can be seen that the conflict order is defined only on \op{s}
that have successfully executed. 
Using conflict order, we define a subclass of opacity, conflict opacity (\coopty). 
\begin{definition}
\label{def:coop1}
A history $H$ is said to be \emph{conflict opaque} or \emph{\coop} if $H$ is \valid{} 
and there exists a t-sequential legal history $S$ such that (1) $S$ is equivalent to
$\overline{H}$ and (2) $S$ respects $\prec_{H}^{RT}$ and $\prec_{H}^{CO}$. 
\end{definition}
\cmnt{Readers familiar with the databases literature may notice \coopty{} 
is analogous to the \emph{order commit conflict serializability} (OCSR)~\cite{WeiVoss:2002:Morg}.}

Now we define a ``conflict'' restriction of local opacity, \textit{conflict local opacity 
(\clo)} by replacing \opq{} with \coop{} in Definition~\ref{def:lo}.
Immediately, we derive that \coopty{} is a subset of \opty{} and \clo{} is a subset of \lo.




\section{Implementing Local Opacity}
\label{sec:impl}

In this section, we present our permissive implementation of \clo. 
By \thmref{perm_ni} it is also \clo-non-interfering. 
Our implementation is based on conflict-graph construction of \coopty{},
a popular technique borrowed from databases
(cf.~\cite[Chap.~3]{WeiVoss:2002:Morg}). 
We then describe a simple garbage-collection optimization that prevents the
memory used by the algorithm from growing without bound.

\subsection{Graph characterization of \coopty}
\label{subsec:graph}

Given a history $H$, we construct a \textit{conflict graph}, $\cg{H} =
(V,E)$ as follows:  (1) $V=\txns(H)$, the set of transactions in $H$
(2) an edge $(T_i,T_j)$ is added to $E$ whenever 
$T_i \prec_H^{RT} T_j$ or $T_i \prec_H^{CO} T_j$, i.e., whenever 
$T_i$ precedes $T_j$ in the real-time or conflict order.
\cmnt{
\begin{itemize}
\item[2.1] Real-Time edges: If $T_i$ precedes $T_j$ in $H$
\item[2.2] \co{} edges: If one of the following conditions hold (a) w-w edges: $c_i<_{H} c_j$ and $Wset(T_i) \cap Wset(T_j) \neq \emptyset$, (b) w-r edges: $c_i<_{H} r_j(x,v)$ and $x \in Wset(T_j)$, or (c) r-w edges: $r_i(x,v)<_{H} c_j$ and $x \in Wset(T_j)$. 
\end{itemize}
}

\noindent From this characterization, we get the following theorem:

\begin{theorem}
\label{thm:graph}
A \legal{} history $H$ is \coop{} iff $\cg{H}$ is acyclic. 
\end{theorem}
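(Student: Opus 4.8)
The plan is to use the standard serialization-graph argument: acyclicity of $\cg{H}$ is equivalent to the existence of a total order on $\txns(H)$ that extends all real-time and conflict edges, and such a total order is precisely what a t-sequential legal $S$ witnessing \coopty{} provides. Throughout I would use that $\prec_H^{RT}=\prec_{\overline H}^{RT}$ and $\prec_H^{CO}=\prec_{\overline H}^{CO}$, so $\cg{H}=\cg{\overline H}$.

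For the ($\Rightarrow$) direction I would start from a t-sequential legal $S$ equivalent to $\overline{H}$ with $\prec_H^{RT}\subseteq\prec_S^{RT}$ and $\prec_H^{CO}\subseteq\prec_S^{RT}$, which exists by Definition~\ref{def:coop1}. Since $S$ is t-sequential, $\prec_S^{RT}$ is a strict total order on $\txns(\overline H)=\txns(H)$, and by construction it contains every edge of $\cg{H}$ (each such edge is a real-time or a conflict edge of $H$). A directed graph whose edge relation embeds into a strict total order has no cycle — a cycle would yield $T\prec_S^{RT}T$ by transitivity — so $\cg{H}$ is acyclic.

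For ($\Leftarrow$), assuming $\cg{H}$ is acyclic, I would pick a topological order $<_S$ of $\cg{\overline H}$, i.e. a total order on $\txns(H)$ containing all its edges, and define $S$ by listing the transactions in increasing $<_S$ order, each with its events in their $\overline H$-order. Then $S$ is t-sequential, equivalent to $\overline H$, and respects $\prec_H^{RT}$ and $\prec_H^{CO}$ since $<_S$ extends both. The substantive step is showing $S$ is \legal. Take any successful read $r_k(x,v)$ (so $v\neq A$) of a transaction $T_k$. Since $H$ is \legal{} it is valid, so there is a committed $T_j$ with $c_j<_H r_k(x,v)$ and $w_j(x,v)\in\evts{T_j}$; moreover, by \legality{} of $H$, $c_j$ is the lastWrite of $r_k(x,v)$ in $H$, i.e. the latest commit of a writer of $x$ preceding the read ($T_j$ possibly $T_0$). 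The w-r pattern $c_j<_H r_k(x,v)$, $x\in Wset(T_j)$, $v\neq A$ gives an edge $T_j\to T_k$, hence $T_j<_S T_k$. Now suppose some committed $T_m$ with $x\in Wset(T_m)$ satisfies $T_j<_S T_m<_S T_k$. Legality of $H$ forbids $c_j<_H c_m<_H r_k(x,v)$, so either $c_m<_H c_j$, which with $x\in Wset(T_m)\cap Wset(T_j)$ gives a w-w edge $T_m\to T_j$, contradicting $T_j<_S T_m$; or $r_k(x,v)<_H c_m$, which with $x\in Wset(T_m)$, $v\neq A$ gives an r-w edge $T_k\to T_m$, contradicting $T_m<_S T_k$. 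So no committed writer of $x$ lies strictly between $T_j$ and $T_k$ in $S$, and by well-formedness $T_k$ writes nothing before $r_k$; hence $c_j$ is the lastWrite of $r_k(x,v)$ in $S$ with $w_j(x,v)\in\evts{T_j}$, so the read is \legal{} in $S$. Since this holds for every successful read, $S$ is \legal, and therefore $H$ is \coop.

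I expect the case analysis in the legality step to be the main obstacle: it relies essentially on the hypothesis that $H$ is \legal{} (not merely valid) to kill the ``middle'' case $c_j<_H c_m<_H r_k(x,v)$, and on matching the three conflict-order clauses exactly to the edges $T_m\to T_j$ and $T_k\to T_m$. A secondary point needing care is that only committed transactions enter the definition of lastWrite, so aborted or incomplete writers of $x$ are irrelevant to \legality{} of $S$ — and that $\cg{H}$ is genuinely unchanged when $H$ is completed to $\overline H$.
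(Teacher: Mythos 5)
Your proof is correct and follows essentially the same route as the paper's: the only-if direction embeds $\cg{H}=\cg{\overline H}$ into the total order induced by the t-sequential witness $S$, and the if direction topologically sorts $\cg{\overline H}$ and checks that the resulting $S$ is equivalent, order-respecting, and \legal. The only difference is cosmetic: where the paper establishes \legality{} of $S$ by first showing $\prec_S^{CO}=\prec_{\overline H}^{CO}$ and then invoking Lemma~\ref{lem:ap-eqv-legal}, you inline that step via a direct \lastw-preservation case analysis, which is the same argument specialized to $S$.
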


\subsection{The Algorithm for Implementing \clo}
\label{subsec:algo}

Our \clo{} implementation is presented in Algorithms~\ref{algo:read}, \ref{algo:write} and \ref{algo:tryc} (we omit the trivial implementation of \trya{} here).
The main idea is that the system maintains a sub-history of 
all the committed transactions. 
Whenever a live transaction $T_i$  wishes to perform an \op{} $o_i$ 
(read, write or commit), the
TM system checks to see if $o_i$ and the transactions that committed
before it,  form a cycle. If so, $o_i$ is not permitted to execute and
$T_i$ is aborted.  Otherwise, the \op{} is allowed to execute. 
Similar algorithm(s) called as serialization graph testing have been
proposed for databases (cf.~\cite[Chap.~4]{WeiVoss:2002:Morg}). Hence, we call it \textit{\sgt{}} algorithm. 


\begin{algorithm}
\caption{Read of a \tobj{} $x$ by a transaction $T_i$} \label{algo:read}
\begin{algorithmic}[1]
\algrenewcommand{\algorithmiccomment}[1]{// \textsf{#1}}

\Procedure {$\textit{read}_i$}{$x$}  
  \State \Comment{read \gchist{}}
  \State $\thist_i = \gchist$;  \Comment{create a local copy of
\gchist{}} \label{lin:ghist}
  \State \Comment{create $v$, to store a the value of $x$} 
  \State $v =$ the latest value written to $x$ in $\thist_i$; 
  \State \Comment{create $\lseq_i$, the local copy of $gseqn$} 
  \State $\lseq_i =$ the value of largest seq. no. of a transaction in $\lchist_i$; 
  \State create the \rvar{} $\rop_i(x, v, \lseq_i)$;
  \State \Comment{update $\lchist_i$}    
  \State $\lchist_i =$ merge $\lchist_i$ and $\thist_i$; append $\rop_i(x, v,
\lseq_i)$ to $\lchist_i$; \label{lin:lchist1}
  \State \Comment{check for consistency of the read \op}
  \If {$(CG(\lchist_i)$ is cyclic)} \label{acycl:check1}
    \State replace $\rop_i(x, v, \lseq_i)$ with $(\rop_i(x, A, \lseq_i)$
					in $\lchist_i)$; 
    \State return \texttt{abort};
  \EndIf    
  \State \Comment{current read is consistent; hence store it in the read set and return $v$}
  \State return $v$; 
\EndProcedure
\end{algorithmic}
\end{algorithm}

\begin{algorithm}
\caption{Write of a \tobj{} $x$ with value $v$ by a transaction $T_i$}\label{algo:write} 
\begin{algorithmic}[1]
\algrenewcommand{\algorithmiccomment}[1]{// \textsf{#1}}


\Procedure {$\textit{write}_i$}{$x, v$} 

  \If {$\textit{write}_i(x,v)$ is the first operation in $T_i$}    
	  \State \Comment{read \gchist{}}
	  \State $\lchist_i = \gchist$; 
	  \State $\lseq_i =$ the value of largest seq. no. of a
			transaction in $\lchist_i$; 
  \EndIf
\State create the \wvar{} $\wop_i(x, v,\lseq_i)$;
\State append $\wop_i(x, v, \lseq_i)$ to $\lchist_i$;
\State return $ok$; 

\EndProcedure
\end{algorithmic}
\end{algorithm}

\cmnt{

\begin{algorithm}
\caption{Insert the \rvar{} $\rop_i$ of transaction $T_i$ into $\lchist_i$ using seq num $s$}
\begin{algorithmic}[1]
\algrenewcommand{\algorithmiccomment}[1]{// \textsf{#1}}
{\small

\Procedure {$insert$}{$\rop_i(x, v, s), \lchist_i$}  
  \State Identify transaction $T_j$ whose $\cseq_j$ is $s$;
  \State Insert $\rop_i(x, v, s)$ after \cvar{} $\cop_j$ but before \cvar{} $\cop_{j+1}$ into $\lchist_i$;
  \State return; 
\EndProcedure
}
\end{algorithmic}
\end{algorithm}

\begin{algorithm}
\caption{Modify the \rvar{} $\rop_i$ of transaction $T_i$ in $\lchist_i$}
\begin{algorithmic}[1]
\algrenewcommand{\algorithmiccomment}[1]{// \textsf{#1}}

{\small

\Procedure {$modify$}{$\rop_i(x, v, s), \lchist_i$}  
  \State Identify the $\rop_i(x, v', s)$ of $T_i$ in $\lchist_i$;
  \State Replace $v'$ with $v$ in $\rop_i(x, v', s)$;
  \State return; 
\EndProcedure
}
\end{algorithmic}
\end{algorithm}

\begin{algorithm}
\caption{Constructs the conflict graph and checks for cycles in $\lchist_i$}
\begin{algorithmic}[1]
\algrenewcommand{\algorithmiccomment}[1]{// \textsf{#1}}

{\small

\Procedure {$isCyclic$}{$\lchist_i$}  
  \State Construct the conflict graph $\cg{\lchist_i}$;
  \If {$\cg{\lchist_i}$ contains a cycle}
    \State return \texttt{true};
  \Else
    \State return \texttt{false};
  \EndIf
\EndProcedure
}
\end{algorithmic}
\end{algorithm}
}

\begin{algorithm}
\caption{TryCommit \op{} by a transaction $T_i$} \label{algo:tryc}
\begin{algorithmic}[1]
\algrenewcommand{\algorithmiccomment}[1]{// \textsf{#1}}


\Procedure {$\tryc_i$} {}
  \State lock $\gchlock$;
  \State \Comment{create the next version of $gseqn$ for the current $T_i$}
  \State $\lseq_i = \gseqn + 1$;      
  \State $\thist_i = \gchist$; \Comment{create a local copy of
\gchist{}}
  \State $\lchist_i =$ merge $\lchist_i$ and $\thist_i$; \Comment{update $\lchist_i$} 
  \State \Comment{create the commit \op{} with $\lseq_i$}
  \State create the \cvar{} $\cop_i(\lseq_i)$;
  \State append $\cop_i(\lseq_i)$ to $\lchist_i$; \label{lin:lchist2}
  \If {$(CG(\lchist_i)$ is cyclic)} \label{acycl:check2}
    \State Replace $\cop_i(\lseq_i)$ with $a_i$ in $\lchist_i$;
    \State Release the lock on $\gchlock$;
    \State return \texttt{abort};
  \EndIf    

  \State $\gchist = \lchist_i$;  \label{lin:update}
  \State $\gseqn = \lseq_i $; 
  \State Release the lock on $\gchlock$;
  \State return \texttt{commit};
\EndProcedure
\end{algorithmic}
\end{algorithm}

Our \sgt{} algorithm maintains several variables. Some of them are global to
all transactions which are prefixed with the letter `g'. The remaining
variables are local. The variables are: 
\begin{itemize}

\item \textit{\gseqn}, initialized to $0$ in the start of the system:
global variable that counts the number
of transactions committed so far. 

\item $\lseq_i$: a transaction-specific variable that contains the number of
transactions currently observed committed by $T_i$. 
When a transaction $T_i$ commits, the current value of \gseqn{} is
incremented and assigned to $\lseq_i$. 

\cmnt{
\item \textit{\rseq}: when a transaction $T_i$ performs a read \op{}
$r_i(x, v)$, the algorithm associates a sequence number \res{} with
the read \op. Suppose $r_i$'s reads $x$ from $T_j$ (a committed transaction). Then, $r_i$'s \rseq{} $\lseq_j$. We denote the
read's sequence number as $r_i(x, v).\rseq$. 
}

\item \textit{\rvar}: captures a read \op{} $r_i$ performed by a transaction
$T_i$. It stores the variable $x$, the value $v$
returned by $r_i$ and the \emph{sequence number} $s$ of $r_i$,
computed as the sequence number of the committed transaction $r_i$ reads from.
We use the notation $\rop_i(x, v, s)$ to denote the read \op{} in the
local or global history.

\cmnt{
\item $\Rset_i$, initialized to $\emptyset$: the set of all \rvar{}
\op{s} performed by $T_i$.  
\cmnt{If $T_i$ performs $r_i(x,v)$ then $x, v$ and the $\rseq{}$ are
stored in this list. }
}

\item \textit{\wvar}: captures a write \op{} $w_i(x, v)$ performed by
a transaction $T_i$. It stores the variable $x$, the value written by
the write \op{} $v$ and the sequence number $s$ of $w_i$,
computed as the sequence number of the previous op{} in $T_i$ or the
sequence number of the last committed transaction preceding $T_i$ if
$w_i$ is the first operation in $T_i$.
We use the notation $\wop_i(x, v, s)$ to denote the \wvar{} \op{}.

\cmnt{
\item $\Wset_i$, initialized to $\emptyset$: the set of all \wvar{}
\op{s} performed by $T_i$. 
}

\item \textit{\cvar}: captures a commit \op{} of a transaction $T_i$. 
It stores the $\lseq_i$ of the transaction. We use the notation
$\cop_i(s)$ to denote the \cvar{} \op{} where $s$ is the $\lseq_i$ of
the transaction. 

\item \textit{\gchist}: captures the history of events of committed
transactions. It is a list of \rvar, \wvar, \cvar{} variables ordered 
by real-time execution. 
We assume that $\gchist$ also contains initial 
values for all t-variables (later updates of these initial values 
will be used for garbage collection).

\item \textit{\gchlock}: This is a global lock variable. The TM system
locks this variable whenever it wishes to read and write to any global variable.

\end{itemize}

The implementations of $T_i$'s {\op}s, denoted by $read_i(x)$, $write_i(x, v)$ and $\tryc_i()$ are described below. We assume here that if any of these is the first operation performed by $T_i$, it is preceded with the initialization all $T_i$'s local variables. 

We also assume that all the \tobj{s} accessed by the STM system are initialized with $0$ (which simulates the effect of having the initializing transaction $T_0$). 


\vspace{1mm}
\noindent
\textit{$read_i(x)$:} 
Every transaction $T_i$ maintains $\lchist_i$ which is a local copy
$\gchist$ combined with events of $T_i$ taken place so far, put at
the right places in $\gchist$, based on their sequence numbers. 
From $\lchist_i$ the values $v$ and $\lseq_i$ are computed. 
If there are no committed writes \op{} on $x$ preceding $read_i(x)$ in
$\lchist_i$, then $v$ is assumed to be the initial value $0$. 
Then, a \rvar{} $\rop_i$ is created for the current read \op{} using
the latest value of $x$, $v$ and the current value of \gseqn,
$\lseq_i$. Then $\rop_i$ is inserted into $\lchist_i$.    
A conflict graph is constructed from the resulting $\lchist_i$ and
checked for acyclicity. If the graph is cyclic then $A$ is inserted
into $\rop_i$ of $\lchist_i$ and then \texttt{abort} is returned. 
Otherwise, the value $v$ is returned. 

\vspace{1mm}
\noindent
\textit{$write_i(x, v)$:} adds a \wvar{} containing $x$ and $v$ and
$\lseq_i$ is inserted to $\lchist_i$.
(If the write is the first operation of $T_i$, then $\lchist_i$ and
$\lseq_i$ are computed based on the current state of  $\gchist_i$.)

\vspace{1mm}
\noindent
\textit{$\tryc_i(x)$:} The main idea for this procedure is similar to
$read_i$, except that the TM system first obtains the lock on
$\gchlock$. Then it makes local copies of $\gseqn$, $\gchist{}$ which are
$\lseq_i$, $\thist_i$, and $\lchist_i$. The value $\lseq_i$ is
incremented, and the $\cop_i(\lseq_i)$ item is appended to $\lchist_i$.
Then a conflict graph is constructed for the resulting $\lchist_i$ 
and checked for acyclicity. If the graph is cyclic then
$\cop_i(seq_i)$ is replaced with $a_i$ in $\lchist_i$, 
the lock is released and \texttt{abort} is returned. 
Otherwise,  $\lseq_i, \lchist_i$, are copied back into $\gseqn$,
$\gchist{}$, the lock is released and $ok$ is returned. \\


\subsection{Correctness of \sgt{}}
\label{subsec:proof}

In this section, we will prove 
that our implementation is permissive w.r.t. \svsls. 
Consider the history $H$ generated by \sgt{} algorithm. 
Recall that only read, \tryc{} and write \op{} (if it is the
first operation in a transaction) access shared memory. 
Hence, we call such operations  \emph{\memop{s}}. 

Note that $H$ is not necessarily sequential:
the transactional \op{s} can execute in overlapping manner. 
Therefore, to reason about correctness, we first
order all the \op{s} in $H$ to get an equivalent sequential
history. We then show that this sequential history is permissive with
respect to \clo.

We place the \memop{s}, say $r_i(x, v/A), \tryc_j(C/A)$ based on the
order in which they access  the global variable \gchist, storing the
history of currently committed transactions. 
The remaining write \op{s} are placed anywhere 
between the last preceding \memop{} and its $\tryc_i$ \op. 
We denote the resulting history, completed by adding 
$\tryc_i(A)$ \op{} for every incomplete transaction $T_i$, 
by $H_g$. 
It can be seen that $H_g$ represents a complete sequential
history that 
respects the real time ordering of memory operations in $H$.
In the rest of this section, we show that $H_g$ is permissive
(and, thus, non-interfering) with respect to \clo. 

\cmnt{
\begin{lemma}
\label{lem:gen-seq}
The history $H_g$ represents the correct ordering of the \op{s} of \sgt{} algorithm.
\end{lemma}
\begin{proof}
To prove this, let us analyse the \op{s} one by one. A read \op{} $r_i(x,v)$, accesses the one global variable \gchist. It reads this variable atomically. Thus any two read \op{s} can be ordered based on their access to \gchist. A \tryc{} \op{} obtains the lock \gchlock{} to perform read and write onto \gchist. Suppose two \tryc{} \op{s}  $\tryc_i$,  $\tryc_j$ executing simultaneously, try to lock \gchlock. Then only one of them, say $t_i$ will succeed and access \gchist. Then the other transaction $t_j$ can access \gchist{} only after $t_i$ releases the lock. Similarly the \memop{s} read and \tryc{} can be ordered. 

The writes of transactions are ordered after all the reads but before \tryc. The writes of different transactions execute only on the local memory. Hence they can be ordered in any way as long they respect the above mentioned transaction order. \qed
\end{proof}
}

Since \clo{} is local, to show that $H_g$ is in \clo,  
it is sufficient to show that, for each transaction $T_i$ in $\txns(H_g)$, 
$\sub{T_i}{H_g}$ is in \clo. 
We denote $\sub{T_i}{H_g}$ by $H_{ig}$. 

Consider a transaction $T_i \in \txns(H_g)$. 
Consider the last complete \memop{} of $T_i$ in $H$, denoted as
$m_i$. 
Note that every $T_i$ performs at least
one successful \memop{} (the proof for the remaining case is trivial). 
We define a history $H_{im}$ as the local history $\lchist_i$ computed
by \sgt{} with the last complete \memop{} of $T_i$ in $H$
(line~\ref{lin:lchist1} of Algorithm~\ref{algo:read} and
line~\ref{lin:lchist2} of Algorithm~\ref{algo:tryc}). 

\cmnt {
Consider a transaction $T_i \in \txns(H_g)$. We define a history $H_{im}$ as 
the local history $\lchist_i$ computed by \sgt{} in the last complete \memop{} of $T_i$
in $H$, let it be denoted $m_i$, that did not return \texttt{abort}
(line~\ref{lin:lchist1} of Algorithm~\ref{algo:read})
and line~\ref{lin:lchist2} of Algorithm~\ref{algo:tryc}).
By the algorithm, the history corresponds to an acyclic conflict graph
$CG(H_{im})$ (cf. checks in line~\ref{acycl:check1} of Algorithm~\ref{algo:read}
and line~\ref{acycl:check2} of Algorithm~\ref{algo:tryc}).
}


\begin{lemma}
\label{lem:equiv}
$H_{im}$ and $H_{ig}$ are equivalent.
\end{lemma}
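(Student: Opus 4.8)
The plan is to prove the equality $\evts{H_{im}}=\evts{H_{ig}}$ by unfolding both sides and matching them, relying on two invariants of the \sgt{} algorithm. Invariant~(i): at every point of the (sequentialized) execution, \gchist{} contains exactly the records of the transactions that have already executed \lineref{update}, arranged in the order in which those transactions committed. Invariant~(ii): \gchist{} never loses events, because \lineref{update} overwrites \gchist{} with $\lchist_i$, and $\lchist_i$ was obtained --- while the \gchlock{} lock is held, during which \gchist{} cannot change --- by merging in a snapshot of \gchist{}, so $\lchist_i$ includes the current content of \gchist{}. Both invariants follow by a routine induction on the execution prefix, using that a committing transaction appends exactly its own records ($\rop$, $\wop$, $\cop$) to \gchist{} and that $H_g$ orders memory operations by the instant at which they access \gchist{}.

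First I would spell out the two event sets. By definition $H_{ig}=\sub{T_i}{H_g}$ consists of the events of all transactions in $\comm(H_g^{T_i})$ together with $\applfn{T_i}{H_g}$, so it suffices to show $\evts{H_{im}}$ is exactly this union. Reading off the code, by the time $T_i$ reaches its memory operation $m_i$, the variable $\lchist_i$ is the merge of all the snapshots of \gchist{} that $T_i$ has taken (at its first operation, at each read, and at $\tryc_i$; lines~\ref{lin:ghist} and \ref{lin:lchist1} of Algorithm~\ref{algo:read}, line~\ref{lin:lchist2} of Algorithm~\ref{algo:tryc}) with the records $\rop_i, \wop_i, \cop_i$ of the operations $T_i$ has itself performed so far appended in place. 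Thus $\evts{H_{im}}$ partitions into a ``committed part'' and a ``$T_i$ part''.

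For the committed part, Invariant~(ii) implies that the snapshot of \gchist{} taken at $m_i$ subsumes every earlier snapshot merged into $\lchist_i$, so the committed part of $\lchist_i$ at $m_i$ is exactly the content of \gchist{} at the instant $m_i$ reads it; by Invariant~(i) this is the set of records of all transactions whose commit precedes that read. Because $H_g$ places memory operations in the order in which they touch \gchist{}, these are precisely the transactions $T_j$ with $c_j <_{H_g} m_i$, which, since $m_i$ is the last complete memory operation of $T_i$, is $\comm(H_g^{T_i})$. For the $T_i$ part I would argue by cases on the fate of $T_i$: if $T_i$ is committed, then $m_i=\tryc_i(C)$ and $\lchist_i$ holds a record for every read and write of $T_i$ and for its commit --- all of $\evts{T_i}$; if $T_i$ is aborted and its last operation is some $r_i(x,A)$, then by well-formedness $T_i$ has issued only reads and $\lchist_i$ holds a record for each of them. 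Identifying each record $\rop_i, \wop_i, \cop_i$ with the corresponding transactional event, the $T_i$ part of $\lchist_i$ at $m_i$ equals $\applfn{T_i}{H_g}$ in each case, and combining the two parts gives $\evts{H_{im}}=\evts{H_{ig}}$.

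The delicate point, and the one I expect to demand the most care, is the aborted case: one has to pin down exactly which memory operation is taken as $m_i$ (and the corresponding $\lchist_i$) so that its $T_i$ part contains precisely the \emph{applicable} events of $T_i$ --- its reads --- and neither any later write nor the \cvar{} record produced inside $\tryc_i$, and one must check in tandem that no transaction commits in $H_g$ strictly between $m_i$ and the last event of $T_i$, so that the committed part still equals $\comm(H_g^{T_i})$. Everything else --- the two inductions for Invariants~(i) and (ii), and the fact that the real-time order in which memory operations access \gchist{} is the order $H_g$ assigns them --- is routine.
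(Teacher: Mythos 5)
Your proposal is correct and takes essentially the same route as the paper: the paper's own proof also splits $H_{im}$ into the events of $T_i$ and the events of transactions committed before $T_i$'s last complete {\memop}, using the sequence numbers {\cseq} and {\rseq} (which encode your invariants (i)--(ii)) to argue that this committed part is exactly $\comm(H_g^{T_i})$, and then matches the result against the definition of $H_{ig}$. The delicate aborted case you flag is simply asserted there (``Obviously, $H_{im}$ and $H_{ig}$ agree on the events of $T_i$''), so your case analysis adds care and detail but not a genuinely different argument.
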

\begin{proof}
Obviously, $H_{im}$ and $H_{ig}$ agree on the events of $T_i$.   
The \sgt{} algorithm assigns \cseq{} (a sequence number) to each
committed transaction $T_j$. 
Similarly it also assigns \rseq{} to each successfully completed read
\op, i.e. the read that did not return \texttt{abort}. 
Based on these sequence numbers, the \sgt{} algorithm constructs
$H_{im}$ (line~\ref{lin:lchist1} of Algorithm~\ref{algo:read},
and line~\ref{lin:lchist2}  of Algorithm~\ref{algo:tryc})
of all the events that committed before the last successful \memop{} of $T_i$ in $H_g$.
On the other hand, every event that appears in $H_{im}$ belongs to $T_i$ 
or a transaction that committed before the last successful \memop{} of $T_i$ in $H_g$.
Thus, $H_{im}$ and $H_{ig}$ are equivalent. \qed
\end{proof}

Even though $H_{im}$ and $H_{g}$ are equivalent, the ordering of the
events in these histories could be different. 
However, the two histories agree on the real-time and conflict orders
of transactions.
\begin{lemma}
\label{lem:orders}
$\prec_{H_{im}}^{CO}=\prec_{H_{ig}}^{CO}$ and $\prec_{H_{ig}}^{RT}=\prec_{H_{im}}^{RT}$
\end{lemma}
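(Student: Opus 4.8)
The plan is to reduce both equalities to a single claim: \emph{$H_{im}$ and $H_{ig}$ order the \memop{s} (all reads and $\tryc$ operations, plus the first write of every transaction) in the same way}. Granting this claim, the lemma follows quickly. By \lemref{equiv}, the two histories have the same set of events, hence the same set of transactions and, for each transaction, the same $\Rset$ and $\Wset$; moreover every transaction other than $T_i$ occurs committed in both, since $\lchist_i$ is built by merging copies of $\gchist$ (which records only committed transactions) with the events of $T_i$. Now observe that, for transactions $T_k,T_m$, whether $T_k\prec^{CO}T_m$ holds is determined by (i) membership conditions on the read/write sets, which are preserved, and (ii) the relative order of a single pair of events: the two $\tryc(C)$'s in the w-w case, a $\tryc_k(C)$ and a read $r_m(x,v)$ in the w-r case, a read $r_k(x,v)$ and a $\tryc_m(C)$ in the r-w case --- each of which is a \memop. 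Likewise, $T_k\prec^{RT}T_m$ is determined by the relative order of the last (applicable) event of $T_k$ and the first event of $T_m$: the last applicable event of any transaction here is either its commit $\tryc(C)$ or, when it is an aborted $T_i$, a read, while the first event of any transaction is its first read or its first write --- again always a \memop. Hence agreement on the order of \memop{s} yields agreement on $\prec^{CO}$ and on $\prec^{RT}$. (The case where $T_i$ performs no successful \memop{} is trivial and handled separately.)

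To establish the claim, I would show that both $H_{im}$ and $H_{ig}$ order the \memop{s} consistently with the global sequence numbers assigned by \sgt. On the $H_g$ side, by the construction of $H_g$ the \memop{s} are ordered by the order in which they access the shared variable $\gchist$ (for $\tryc$, under the lock $\gchlock$); this is exactly the order in which \sgt{} reads and updates $\gseqn$, so the $\tryc(C)$ events appear in increasing order of their $\cseq$, and a read $\rop_m$ appears right after the $\tryc$ of the committed transaction it reads from and before the $\tryc$ of the next committed transaction. On the $H_{im}=\lchist_i$ side, the algorithm inserts each committed transaction according to its $\cseq$ and each read according to the sequence number $\lseq_m$ it carries (the ``merge'' steps together with lines~\ref{lin:lchist1} and~\ref{lin:lchist2}), which is the same rule. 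Since, by \lemref{equiv}, $\lchist_i$ contains exactly $T_i$ together with the committed transactions appearing in $H_{ig}$, the two orders induced on the \memop{s} coincide; and the remaining (non-\memop) writes all belong to $T_i$, on whose events $H_{im}$ and $H_{ig}$ already agree.

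The step I expect to be the main obstacle is the bookkeeping around \emph{partial} observation and ties. One must verify that the set of committed transactions recorded in $\lchist_i$ at the last complete \memop{} of $T_i$ is precisely the set of committed transactions of $H_{ig}$ (a consequence of \lemref{equiv}, but worth spelling out), and that a read $\rop_m$ sits on the same side of a $\tryc_k(C)$ whether computed from sequence numbers (in $\lchist_i$) or from access order to $\gchist$ (in $H_g$) --- in particular when $\rop_m$'s sequence number equals or is adjacent to $T_k$'s $\cseq$. Handling operations that carry the same sequence number (two reads, or a read and its source commit placed at a boundary) needs a short argument, but since reads never conflict with reads and a read cannot precede in real-time order the commit it reads from, such ties affect neither $\prec^{CO}$ nor $\prec^{RT}$.
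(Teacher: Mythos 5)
Your proposal is correct and takes essentially the same route as the paper's proof: both arguments hinge on the sequence-number bookkeeping of \sgt{} (commits acquire monotonically increasing \cseq{} under $\gchlock$, each read carries the \cseq{} of the commit it reads from, and $H_g$ orders \memop{s} by their accesses to $\gchist$), the only difference being that you package the paper's per-relation case analysis (w-w, w-r, r-w, real-time) as a single claim about agreement on the \memop{} order together with the observation that $\prec^{CO}$ and $\prec^{RT}$ are determined by \memop{} positions. One harmless slip: the non-\memop{} writes do not all belong to $T_i$ (any committed transaction's write that is not its first operation is also not a \memop), but by your own reduction the placement of such writes is irrelevant to $\prec^{CO}$ and $\prec^{RT}$, so the argument stands.
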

\begin{proof}
We go case by case for each possible relation in $\prec^{CO}\cup\prec^{RT}$.

\noindent
\emph{Write-write order:} we want to show that 
$(\tryc_p <_{im} \tryc_q) \Leftrightarrow (T_p.\cseq < T_q.\cseq) \Leftrightarrow (\tryc_p <_{ig} \tryc_q)$.

The result $(\tryc_p <_{im} \tryc_q) \Leftrightarrow (T_p.\cseq < T_q.\cseq)$ follows from the construction of $H_{im}$. We have already shown earlier that \tryc{} \op{} is atomic. When a transaction $T_i$ successfully commits in the \sgt{} algorithm, it is assigned an unique \cseq{} which is monotonically increasing. As a result, a \tryc{} \op{} which commits later gets higher \cseq{} in $H_g$. Since the ordering of events in $H_g$ are same as $H_{ig}$, we get that $(T_p.\cseq < T_q.\cseq) \Leftrightarrow (\tryc_p <_{ig} \tryc_q)$. 

\noindent
\emph{Write-read order:} For a committed transactions $T_p$ and a
successful read \op{} 
$r_q$, we want to show that 
$(\tryc_p <_{im} r_q) \Leftrightarrow (T_p.\cseq \leq r_q.\rseq) \Leftrightarrow (\tryc_p <_{ig} r_q)$.

The result $(\tryc_p <_{im} r_q) \Leftrightarrow (T_p.\cseq \leq
r_q.\rseq)$ follows from 
the construction of $H_{im}$. The \sgt{} algorithm stores as a part of
the read \op{} $r_j$,  \rseq{}  which is same as the \cseq{} of the
latest transaction that  committed before $r_j$, say $T_i$. Thus
$T_i.\cseq = r_j.\rseq$.  From the above argument for the write-write
order, we have that  any transaction $T_k$ that committed before $T_i$
will have lower  \cseq. This holds in $H_g$ and as a result also holds
in $H_{ig}$.  This shows that $(T_p.\cseq \leq r_q.\rseq) \Leftrightarrow (\tryc_p <_{ig} r_q)$.

\noindent
\emph{Read-write order:} For a committed transactions $T_q$ and a successful read \op{} $r_p$, we want to show that $(r_p <_{im} \tryc_q) \Leftrightarrow (r_p.\rseq < T_q.\cseq) \Leftrightarrow (r_p <_{ig} \tryc_q)$. The reasoning is similar to the above cases. 

\noindent Hence, $\prec_{H_{im}}^{CO}=\prec_{H_{ig}}^{CO}$.

\noindent
\emph{Real-time order:} Consider two transaction $T_p$, $T_q$ in
$H_{ig}$ such that $T_p \prec_{H_{ig}}^{RT} T_q$ which also holds in
$H_g$.  From the construction of $H_{ig}$, we get that $T_p$ is a
committed transaction  with its last event being $\tryc_p$. Indeed,
the only possibly uncommitted transaction in $H_{ig}$ is $T_i$ that
performs the last event in $H_{ig}$ and, thus, cannot precede any
transaction in $\prec_{H_{ig}}^{RT}$. 

Consider the first \memop{} of $T_q$ (by our assumption, there is one
in each $T_q$ in $H_{ig}$). 
By the algorithm, the sequence number associated with the \memop{} is 
at least as high as the sequence number of $\tryc_p$.
Thus $T_p\prec_{H_{im}}^{RT} T_q$ 
The other direction is analogous.

\cmnt{
Let the first successful event of $T_q$ be $o_q$ which could be a read, write or \tryc. Considering each of the cases: (1) $o_q$ is a read \op{} $r_q$: In this case, we have that $\tryc_p <_{ig} r_i$. From the above case of write-read order, we get that this is also true in $H_{im}$. (2)$o_q$ is a $\tryc_q$ \op: In this case, we have that $\tryc_p <_{ig} \tryc_q$. If $\tryc_q$ returns abort, then as observed above $\tryc_q$ is the last \op{} in $H_{im}$. Otherwise, from write-write order shown above, we get that $\tryc_p <_{im} \tryc_q$. (3) $o_q$ is a write \op{} $w_q$: In this case, we have that $\tryc_p <_{ig} w_q$. This implies that $(\tryc_p <_{ig} \tryc_q) \Rightarrow (\tryc_p <_{im} \tryc_q)$ (from the above case).  In \sgt{} algorithm, the construction of $H_{im}$ is such that all the write \op{s} of a transaction are appended towards the end just before the \tryc{} \op. Thus, we have that $\tryc_p <_{im} w_q$. This proves all the cases. The other direction is analogous.
}
\cmnt {
\emph{Real-time order:}
Consider two transaction $T_p$, $T_q$ in $H_{ig}$ such that $T_p
\prec_{H_{ig}}^{RT} T_q$. 
From the construction of $H_{ig}$, we get that $T_p$ is a committed
transaction  with its last event being $\tryc_p$. 
Indeed, the only possibly uncommitted transaction in $H_{ig}$ is $T_i$
that performs the last event in $H_{ig}$ and, thus, cannot precede any
transaction in $\prec_{H_{ig}}^{RT}$.  
By the algorithm, any \memop{} of $T_q$ has a higher sequence number
than $T_q.\cseq$ and, thus, is ordered in $H_{im}$ after $T_p$.
The other direction is analogous.
}

Hence, $\prec_{H_{im}}^{RT}=\prec_{H_{ig}}^{RT}$. \qed
\end{proof}

Lemmas~\ref{lem:equiv} and~\ref{lem:orders} imply that $H_{im}$ 
and $H_{ig}$ generate the same conflict graph:
\begin{corollary}
\label{cor:graphs-ia-ig}
$\cg{H_{ig}}=\cg{H_{im}}$
\end{corollary}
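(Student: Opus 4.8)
The plan is to unwind the definition of the conflict graph and feed in the two lemmas directly. Recall that for any history $H$, the conflict graph $\cg{H}=(V,E)$ is determined by exactly two pieces of data: the vertex set $V=\txns(H)$, and the edge set $E$, where $(T_p,T_q)\in E$ if and only if $T_p\prec_H^{RT}T_q$ or $T_p\prec_H^{CO}T_q$. So it suffices to show that $H_{im}$ and $H_{ig}$ agree on both components.

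For the vertices: by Lemma~\ref{lem:equiv}, $H_{im}$ and $H_{ig}$ have the same set of events, and since a transaction belongs to $\txns(\cdot)$ exactly when at least one of its events appears in the history, we conclude $\txns(H_{im})=\txns(H_{ig})$; hence the two graphs have the same vertex set.

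For the edges: Lemma~\ref{lem:orders} gives $\prec_{H_{im}}^{RT}=\prec_{H_{ig}}^{RT}$ and $\prec_{H_{im}}^{CO}=\prec_{H_{ig}}^{CO}$. Since the edge relation of the conflict graph is precisely the union $\prec^{RT}\cup\prec^{CO}$ taken over the (now common) vertex set, the edge sets of $\cg{H_{im}}$ and $\cg{H_{ig}}$ coincide as well. Combining the two observations yields $\cg{H_{ig}}=\cg{H_{im}}$.

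There is essentially no obstacle here: the statement is a bookkeeping corollary of Lemmas~\ref{lem:equiv} and~\ref{lem:orders}, and the only point that deserves a word of care is the (immediate) passage from equality of event sets to equality of transaction sets. All of the substantive work has already been carried out in the proofs of those two lemmas.
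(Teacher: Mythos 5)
Your proof is correct and follows exactly the route the paper takes: the paper derives the corollary directly from Lemma~\ref{lem:equiv} (same events, hence same transaction set and vertices) and Lemma~\ref{lem:orders} (same real-time and conflict orders, hence same edges). The only difference is that you spell out the bookkeeping that the paper leaves implicit.
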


\noindent Now we argue about \svvalidity{} of $H_{im}$ and $H_{ig}$.
\begin{lemma}
\label{lem:ig-svvalid}
$H_{ig}$ is \svvalid.
\end{lemma}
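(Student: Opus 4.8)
The plan is to check \svvalidity{} of $H_{ig}$ read by read, reading off the value of each read from the \sgt{} algorithm and locating the corresponding committed write using the way $H_g$ orders memory operations. Recall that $H_{ig}=\sub{T_i}{H_g}$ consists of the applicable events of $T_i$ together with the events of every transaction that committed before the last event of $T_i$ in $H_g$. Hence each successful read $r_k(x,v)$ occurring in $H_{ig}$ is an invocation of $\textit{read}_k(x)$ in the run, performed either by $T_i$ or by some committed transaction $T_k$; by the algorithm, $v$ is the latest value written to $x$ in the local copy $\thist_k=\gchist$ taken at that point. Let $T_q$ be the committed transaction supplying this value, so that $w_q(x,v)\in\evts{T_q}$; if no committed writer of $x$ is present in $\gchist$ at that moment, then $v=0$ and $T_q=T_0$, which is covered by the convention that $T_0$ initializes every object to $0$. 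First I would note that, since $r_k$ is a successful read of a transaction all of whose reads precede its writes, $T_q$ is distinct from the transaction performing $r_k$, and, being a transaction that committed before $r_k$ and hence before the last event of $T_i$, the event $c_q$ indeed belongs to $H_{ig}$.

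The core step will be to argue that $c_q$ is exactly $r_k(x,v)$'s \lastw{} in $H_{ig}$. Because $T_q$ installed its local history into $\gchist$ before $\textit{read}_k(x)$ read $\gchist$, and $H_g$ orders memory operations according to their accesses to $\gchist$, we have $c_q<_{H_g}r_k$, hence $c_q<_{H_{ig}}r_k$. For the converse, suppose some transaction $T_p$ with $x\in\Wset(T_p)$ satisfied $c_q<_{H_{ig}}c_p<_{H_{ig}}r_k$. Then $T_p$ would also have installed its history into $\gchist$ before $\textit{read}_k(x)$ read it, and, since $T_q$ installed before $T_p$, the write of $x$ by $T_p$ would be more recent in $\gchist$ than that of $T_q$ --- contradicting the choice of $T_q$ as the source of the latest $x$-value read. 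So no committed writer of $x$ lies strictly between $c_q$ and $r_k$ in $H_{ig}$, i.e. $\lwrite{r_k(x,v)}{H_{ig}}=c_q$ with $w_q(x,v)\in\evts{T_q}$, so $r_k(x,v)$ is \svvalid. As this holds for every successful read of $H_{ig}$, $H_{ig}$ is \svvalid.

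The step I expect to require the most care is the correspondence used in both directions above: ``$T_p$ installs its history into $\gchist$ before $\textit{read}_k(x)$ reads $\gchist$'' if and only if ``$c_p<_{H_{ig}}r_k$''. This is precisely where the construction of $H_g$ must be invoked --- each $\tryc(C)$ event is placed in $H_g$ at the point where the transaction atomically installs its local history into $\gchist$ under $\gchlock$, while each $\textit{read}_k$ reads $\gchist$ atomically, so the two accesses are totally ordered and $H_g$ (hence $H_{ig}$) reflects that order. An alternative route would be to first establish \svvalidity{} of $H_{im}=\lchist_i$ directly --- which is immediate from the insertion rule that places each read variable $\rop_k(x,v,s)$ right after the committed transaction with $\cseq=s$ and before any transaction with larger sequence number, leaving no $x$-writer between the read and its source --- and then transport \svvalidity{} to $H_{ig}$ using Lemmas~\ref{lem:equiv} and~\ref{lem:orders}; but that transport hinges on the same comparison of commit events against reads, so I would keep the direct argument as the primary one.
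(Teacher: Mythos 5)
Your proof is correct, but it takes a more direct route than the paper. The paper's own argument is two-step: it first observes that the locally constructed history $H_{im}=\lchist_i$ is \svvalid{} essentially by construction (each successful read returns the argument of the last committed write on $x$ appearing in $\lchist_i$), and then transports legality from $H_{im}$ to $H_{ig}$ by combining \lemref{equiv} with \lemref{orders} via the appendix lemma (\lemref{ap-eqv-legal}) stating that legality is preserved between equivalent histories with identical conflict orders. You instead verify \svvalidity{} of $H_{ig}$ read by read, identifying the source transaction $T_q$ from the value returned out of $\gchist$ and showing $c_q=\lwrite{r_k(x,v)}{H_{ig}}$ directly from the way $H_g$ orders memory operations by their accesses to $\gchist$. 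Note that the two facts your argument leans on---commit events are installed into (and hence ordered in) $\gchist$ in commit order, and the relative order of a commit and a read in $H_{ig}$ coincides with the order of their $\gchist$ accesses---are precisely the w-w and w-r correspondences already packaged in \lemref{orders}, so in effect you re-derive that lemma inline rather than cite it. Your version buys a self-contained argument that does not appeal to the appendix lemma; the paper's version buys brevity by reusing machinery it has already proved, and you correctly flag the paper's route as the alternative in your closing remark.
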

\begin{proof}
By the algorithm, every successful read \op{} on a variable $x$ within
$T_i$ returns the argument of the last committed write on $x$ 
that appears in $\lchist_i$ (and, thus, in $H_{im}$).   
By applying this argument to every prefix of $H_{im}$, 
we derive that $H_{im}$ is \svvalid. By Lemmas~\ref{lem:ap-eqv-legal} and~\ref{lem:orders}, 
we derive that $H_{ig}$ is also legal. \qed
\end{proof}

\begin{theorem}
\label{thm:gen-clo}
Let $H_g$ be a history generated by the \sgt{} algorithm. 
Then $H_g$ is in \clo.
\end{theorem}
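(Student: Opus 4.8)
The plan is to reduce the claim, via the locality of \clo, to a statement about each individual local sub-history, and then to settle that statement using the graph characterization of \coopty{} (\thmref{graph}) together with the facts already collected about $H_{ig}=\sub{T_i}{H_g}$ and the algorithm's local view $H_{im}$. By \defref{lo} (with \opq{} replaced by \coop{}), $H_g$ is in \clo{} if and only if $\sub{T_i}{H_g}$ is \coop{} for every $T_i\in\txns(H_g)$, so I would fix an arbitrary $T_i$ and argue that $H_{ig}$ is \coop.

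First I would dispose of the case where $T_i$ performs no successful \memop{}: there $H_{ig}$ reduces to a sub-history of committed transactions only, and the claim is immediate. So assume $T_i$ has a last successful \memop{} $m_i$, and recall that $H_{im}$ is the value of $\lchist_i$ just after $m_i$ has been appended (line~\ref{lin:lchist1} of Algorithm~\ref{algo:read} or line~\ref{lin:lchist2} of Algorithm~\ref{algo:tryc}). The key observation is that, since $m_i$ returned a non-abort response, the acyclicity test the algorithm runs at $m_i$ (line~\ref{acycl:check1} of Algorithm~\ref{algo:read}, line~\ref{acycl:check2} of Algorithm~\ref{algo:tryc}) must have passed, so $\cg{H_{im}}$ is acyclic; by \corref{graphs-ia-ig} this transfers to $\cg{H_{ig}}=\cg{H_{im}}$. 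Combining acyclicity of $\cg{H_{ig}}$ with the legality of $H_{ig}$ provided by \lemref{ig-svvalid}, \thmref{graph} applied to the \legal{} history $H_{ig}$ yields that $H_{ig}$ is \coop. Since $T_i$ was arbitrary, $H_g$ is in \clo.

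The hard part lies not in this assembly but in the supporting lemmas it rests on: the nontrivial point --- discharged by \lemref{equiv}, \lemref{orders} and \corref{graphs-ia-ig} --- is that the \emph{local} cycle test the algorithm performs on $\lchist_i$ faithfully reflects the conflict graph of the \emph{objective} local sub-history $H_{ig}$, and in particular that the snapshot $H_{im}$, taken before any abort-induced rewriting of $m_i$, is exactly the object one should compare against $H_{ig}$, including the correct handling of $T_i$'s own applicable events when $T_i$ is aborted or incomplete. Granting that, the present theorem is just a routine case split on whether $T_i$ is committed, aborted, or has no successful \memop. (Looking ahead, once $H_g$ is additionally shown to be \clo-permissive, \thmref{perm_ni} will give that the \sgt{} implementation is \clo-non-interfering.)
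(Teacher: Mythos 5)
Your proposal is correct and takes essentially the same route as the paper's proof: acyclicity of $\cg{H_{im}}$ from the algorithm's checks at the last successful \memop{}, transfer to $\cg{H_{ig}}$ via \corref{graphs-ia-ig}, legality from \lemref{ig-svvalid}, \coopty{} of $H_{ig}$ by \thmref{graph}, and locality of \clo{} to conclude. Your explicit handling of the trivial case and your insistence on the last \emph{successful} \memop{} merely spell out what the paper leaves implicit (its lemmas likewise reason about the last successful \memop{}).
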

\begin{proof}
By the algorithm, the corresponding $H_{im}$ produces an acyclic conflict graph 
$CG(H_{im})$ (cf. checks in line~\ref{acycl:check1} of
Algorithm~\ref{algo:read} 
and line~\ref{acycl:check2} of Algorithm~\ref{algo:tryc}).
By~\corref{graphs-ia-ig}, $CG(H_{ig})$ is also acyclic.

Thus, by Theorem~\ref{thm:graph} and Lemma~\ref{lem:ig-svvalid}, 
for every $T_i\in\txns(H_g)$, $H_{ig}$
is \coop. Since \clo{} is a local property, we derive that $H_g$ is in
\clo. \qed
\end{proof}

Having proved that \sgt{} algorithm generates \clo{} histories, 
we now show that \sgt{} algorithm is in fact permissive w.r.t. \clo{}. 

\begin{theorem}
\label{thm:perm}
Let $H_g$ be a history generated by \sgt{} algorithm. 
Then $H_g$ is in $\permfn{\clo}$. 
\end{theorem}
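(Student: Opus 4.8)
The plan is to show that for every aborted transaction $T_a$ in $H_g$, no history in $\H^{T_a,C}$ can be in $\clo$ — this is exactly condition (2) of Definition~\ref{def:perm}, and combined with Theorem~\ref{thm:gen-clo} (which gives condition (1), $H_g\in\clo$) it yields $H_g\in\permfn{\clo}$. Fix $T_a\in\aborted(H_g)$ and let $H'\in\H^{T_a,C}$, i.e., $H'$ is $H_g^{T_a}$ with the final aborting operation of $T_a$ replaced by a non-abort response (a successful read, a successful write, or $\tryc_a(C)$). I want to derive that $\sub{T_a}{H'}$ is not \coop{}, which by locality of \clo{} implies $H'\notin\clo$.

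First I would recall why $T_a$ aborted: by the \sgt{} algorithm, the last \memop{} of $T_a$ in $H$ — call it $m_a$ — computed a local history $\lchist_a$ whose conflict graph $\cg{\lchist_a}$ contained a cycle (the check in line~\ref{acycl:check1} of Algorithm~\ref{algo:read} or line~\ref{acycl:check2} of Algorithm~\ref{algo:tryc}). Denote this history $H_{am}$, exactly as in the proof of Theorem~\ref{thm:gen-clo}, except that now the offending operation carries its non-abort response rather than $A$. The key observation is that $H_{am}$ consists of $T_a$'s events together with the events of precisely those transactions that committed before $m_a$ accessed \gchist{} — i.e., the committed transactions in $H_g^{T_a}$ — so $H_{am}$ is equivalent to $\sub{T_a}{H'}$ exactly as Lemma~\ref{lem:equiv} established for the abort case. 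I would then reuse the argument of Lemma~\ref{lem:orders} verbatim: the sequence-number bookkeeping does not depend on whether the final operation returned abort, so $\prec_{H_{am}}^{CO}=\prec_{\sub{T_a}{H'}}^{CO}$ and $\prec_{H_{am}}^{RT}=\prec_{\sub{T_a}{H'}}^{RT}$, whence $\cg{H_{am}}=\cg{\sub{T_a}{H'}}$ as in Corollary~\ref{cor:graphs-ia-ig}. Since $\cg{H_{am}}$ is cyclic, so is $\cg{\sub{T_a}{H'}}$, and by Theorem~\ref{thm:graph} the sub-history $\sub{T_a}{H'}$ is not \coop{}. Hence $H'\notin\clo$.

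The step I expect to be the main obstacle is justifying that the cycle detected by the algorithm at $m_a$ is \emph{robust} under the replacement of $T_a$'s terminal operation — in particular, that replacing $r_a(x,A)$ with a successful $r_a(x,v)$ does not somehow remove the cycle. This needs care: the algorithm detects cyclicity of $\lchist_a$ with the read \emph{already inserted and assigned value $v$} (line~\ref{lin:lchist1}: the read variable $\rop_a(x,v,\lseq_a)$ is appended \emph{before} the acyclicity check), and only afterwards overwrites $v$ with $A$ on the failing branch; so the cycle witnessed is precisely the cycle of a history in which $r_a$ reads a legitimate value — which is the history $H'$. I would make this explicit, noting that the value $v$ read is the latest committed write to $x$ in $\lchist_a$, so $\sub{T_a}{H'}$ is \svvalid{} in the read case (and trivially so when the terminal operation is a write or $\tryc_a(C)$, by Lemma~\ref{lem:ig-svvalid} applied to the corresponding prefix), ensuring that $\sub{T_a}{H'}$ is legal and therefore that non-\coop{}ity follows genuinely from the cycle and not from a legality violation. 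With that point nailed down, the remaining pieces are routine adaptations of Lemmas~\ref{lem:equiv}–\ref{lem:ig-svvalid} already proved for the committed/aborted cases.
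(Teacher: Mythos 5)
Your proposal is correct and takes essentially the same route as the paper's own proof: condition (1) comes from Theorem~\ref{thm:gen-clo}, and condition (2) from the observation that the algorithm ran its acyclicity check with the non-abort version of $T_a$'s final operation already inserted in $\lchist_a$, so the detected cycle transfers to $\subs{T_a^C}{H'}$ via Lemmas~\ref{lem:equiv}--\ref{lem:orders} (Corollary~\ref{cor:graphs-ia-ig}) and rules out \coop{}ness by Theorem~\ref{thm:graph}, hence $H'\notin\clo$ by locality. The one nuance to state explicitly is that for a terminal read, $\H^{T_a,C}$ permits an arbitrary non-abort value, so the case where the value differs from the \lastw{} value should be dismissed first by legality (the paper does this by deducing from the assumed \coop{}ness that $v$ must equal the \lastw{} value), after which your cycle argument applies exactly as written.
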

\begin{proof}
We shall prove this by contradiction. Assume that $H_g$ is not in
$\permfn{\clo}$. From Theorem~\ref{thm:gen-clo}, we get that $H_g$ is in \clo.
Hence, condition (2) of \defref{perm} is not true. 
Thus, there is an aborted transaction $T_a$ in $H_g$ which can be
committed so that the resulting history is still in \clo. We denote the
modified transaction as $T_a^C$ and the resulting history as $H'_g$. 
There are two cases depending on the final event of $T_a$: 

\noindent
\textit{Case 1: The last event of $T_a$ is a read \op{} $r_a(x,A)$.} 
In order for $T_a^C$ to be committed in $H'_g$, $r_a(x,A)$ is
converted  to $r_a(x,v)$ for some $v$. If $H'_g$ is in \clo, then  
$\subs{T_a^C}{H'_g}$ is \coop. By~\corref{coop-legal}, we get that
$\subs{T_a^C}{H'_g}$ 
is legal. 
Therefore, $v$ is the value written by the transaction committing
$r_a$'s  \lastw{} in $H'_g$ (the current value on $v$). It can be seen
that $H'_g$ differs from $H_g$ only in $r_a$. 

But when \sgt{} algorithm attempts to read this value of $x$ in 
line~\ref{lin:lchist1} of Algorithm~\ref{algo:read}, it causes the conflict graph
maintained to be cyclic. From \corref{graphs-ia-ig} applied to $H'_g$,
we get that the conflict graph of $\subs{T_a^C}{H'_g}$ is also cyclic. 
By~\thmref{graph}, we derive that $\subs{T_a^C}{H'_g}$ is not
\coop. This implies that $H'_g$ is not in \clo{}---a contradiction.

\noindent
\textit{Case 2: The last event of $T_a$ is an abort \op{} $\tryc_a(A)$.}  
The argument in this case is similar to the above
case.  In order for $T_a^C$ to be committed in $H'_g$, $\tryc_a(A)$ is
converted into $\tryc_a(C)$. When \sgt{} algorithm attempts to commit
$T_a$ in line~\ref{lin:lchist2} of Algorithm~\ref{algo:tryc}, it causes the conflict graph
maintained to be cyclic.
By~\corref{graphs-ia-ig} applied to $H'_g$,
we derive that the conflict graph of $\subs{T_a^C}{H'_g}$ is also cyclic.
From \thmref{graph}, we then get that $\subs{T_a^C}{H'_g}$ is not
\coop.  This implies that $H'_g$ is not in \clo{} and hence again a
contradiction.

Therefore, no transaction $T_a$ in $H_g$ can not be transformed into a
committed  transaction $T_a^C$ while still staying in \clo. 
Hence, $H_g$ is in $\permfn{\clo}$. \qed
\end{proof}
It is left to show that our algorithm is \emph{live}, i.e., under certain
conditions, every operation eventually completes.

\begin{theorem}
\label{thm:live}
Assuming that no transaction fails while executing the \tryc{} \op{}
and $\gchlock$ is starvation-free, every operation of \sgt{} eventually returns. 
\end{theorem}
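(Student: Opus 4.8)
The plan is to inspect each of the three shared-memory operations of \sgt{}---$read_i$, $write_i$ and $\tryc_i$---and show that each consists of a bounded number of steps together with at most one finite local computation, so that it is guaranteed to return.

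First I would dispose of $write_i$ (Algorithm~\ref{algo:write}) and $read_i$ (Algorithm~\ref{algo:read}). Neither acquires $\gchlock$ and neither contains a loop. A call to $write_i$ reads $\gchist$ at most once and appends a single \wvar{} to $\lchist_i$, so it returns after finitely many steps. A call to $read_i$ takes one local snapshot of $\gchist$, computes $v$ and $\lseq_i$ from it, appends one \rvar{} to $\lchist_i$, and then tests whether $\cg{\lchist_i}$ is cyclic. The only point needing an argument is that this test terminates: at any finite prefix of the execution only finitely many transactions have taken a step, hence $\lchist_i$ contains finitely many events and $\cg{\lchist_i}$ is a finite graph, on which acyclicity checking is a terminating (indeed polynomial) procedure. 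Consequently $read_i$ returns, either with $v$ or with \texttt{abort}.

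Next I would treat $\tryc_i$ (Algorithm~\ref{algo:tryc}). Its only potentially blocking step is the acquisition of $\gchlock$ on the first line; everything afterwards is the same kind of finite local computation as above (one snapshot of $\gchist$, one merge into $\lchist_i$, one $\cop_i(\lseq_i)$ appended, one acyclicity test on the finite graph $\cg{\lchist_i}$), followed by a release of $\gchlock$ and a return. To argue that the lock acquisition succeeds, I would note that every control-flow path of Algorithm~\ref{algo:tryc} from acquiring $\gchlock$ to returning releases the lock---on the cyclic branch before returning \texttt{abort}, and on the acyclic branch after updating $\gchist$ and $\gseqn$---and that no loop lies on these paths. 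Since by hypothesis no transaction fails while executing $\tryc_i$, a transaction holding $\gchlock$ cannot stop before releasing it; hence the lock is always eventually freed. Because $\gchlock$ is assumed starvation-free, the pending $\tryc_i$ is eventually granted the lock, and the rest of the procedure then completes in finitely many steps and returns \texttt{commit} or \texttt{abort}.

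The only real obstacles are the termination of the conflict-graph acyclicity test and the absence of indefinite blocking on $\gchlock$; both are handled by the two observations above---finiteness of $\cg{\lchist_i}$ at every finite point of the execution, and the fact that the critical section guarded by $\gchlock$ is loop-free and, under the no-failure assumption, always run to completion, so that starvation-freedom of $\gchlock$ applies. Combining the three cases, every operation of \sgt{} eventually returns.
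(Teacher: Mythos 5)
Your proposal is correct and follows essentially the same route as the paper's proof: reads and writes involve no waiting, and for $\tryc$ the finite, failure-free critical section plus starvation-freedom of $\gchlock$ guarantees the lock is eventually acquired and the operation returns. The extra details you supply (finiteness of $\cg{\lchist_i}$ and that every path through $\tryc$ releases the lock) merely make explicit what the paper's argument leaves implicit.
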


\begin{proof}
It can be seen that \textit{read} and \textit{write} functions do not involve
any waiting. 
Therefore, $\tryc$ is the only function which involves waiting for 
the $\gchlock$ variable. But since the lock is starvation-free
and no transaction executing $\tryc$ obtains the lock forever, 
every such waiting is finite. 
Thus, every $\tryc$ \op{} eventually grabs the lock and, after,
computing the outcome, returns. \qed
\end{proof}

\begin{theorem}
\label{thm:main-gen-clo}
Let $H_g$ be a history generated by the \sgt{} algorithm. Then $H_g$ is in \clo.
\end{theorem}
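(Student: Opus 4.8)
The plan is to observe that this statement simply restates \thmref{gen-clo}, so the same argument goes through; I recall it here in outline. The two working ingredients are the graph characterization of \coopty{} (\thmref{graph}) and the fact that \clo{} is a \emph{local} correctness criterion, which lets us certify each transaction's local sub-history in isolation rather than reasoning about a single global serialization.

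First I would fix an arbitrary $T_i\in\txns(H_g)$ and consider its local sub-history $\subs{T_i}{H_g}$, which I abbreviate $H_{ig}$. We may assume $T_i$ performs at least one successful \memop{} (otherwise $H_{ig}$ reduces to a prefix of committed transactions and the claim follows by applying the argument below to the last transaction to commit). By \thmref{graph}, to conclude that $H_{ig}$ is \coop{} it is enough to show that $H_{ig}$ is \legal{} and that $\cg{H_{ig}}$ is acyclic. Legality is \lemref{ig-svvalid}. For acyclicity I would route through the bridge history $H_{im}$ --- the local copy $\lchist_i$ that \sgt{} holds at the last successful \memop{} of $T_i$ (line~\ref{lin:lchist1} of Algorithm~\ref{algo:read}, or line~\ref{lin:lchist2} of Algorithm~\ref{algo:tryc}). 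That \memop{} is allowed to succeed only when the acyclicity test (line~\ref{acycl:check1} of Algorithm~\ref{algo:read}, or line~\ref{acycl:check2} of Algorithm~\ref{algo:tryc}) passes, so $\cg{H_{im}}$ is acyclic; and by \corref{graphs-ia-ig} we have $\cg{H_{ig}}=\cg{H_{im}}$, so $\cg{H_{ig}}$ is acyclic as well. Hence $H_{ig}$ is \coop{}, and since $T_i$ was arbitrary and \clo{} is local, $H_g$ is in \clo{}.

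The step I expect to carry the real weight is the identity $\cg{H_{ig}}=\cg{H_{im}}$ underlying \corref{graphs-ia-ig}. The subtlety is that $H_{ig}$, ordered by the actual execution order induced by \sgt{}, and $H_{im}$, the algorithm's sequence-number-sorted local copy, are genuinely different event sequences; what must be checked is that they are equivalent (\lemref{equiv}) and induce the same real-time and conflict orders (\lemref{orders}). The latter is a case analysis over w-w, w-r and r-w conflicts together with real-time edges, in each case reducing the ordering question to monotonicity of the sequence numbers \cseq{} and \rseq{} assigned by the algorithm --- this is where the concrete bookkeeping of \sgt{} has to be lined up carefully with the abstract definitions, and I expect it to be the only delicate part of the argument.
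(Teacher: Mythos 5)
Your proposal is correct and follows essentially the same route as the paper's proof of this statement (given as Theorem~\ref{thm:gen-clo}): the acyclicity checks in the algorithm give an acyclic $\cg{H_{im}}$, Corollary~\ref{cor:graphs-ia-ig} (via Lemmas~\ref{lem:equiv} and~\ref{lem:orders}) transfers this to $\cg{H_{ig}}$, and Theorem~\ref{thm:graph} together with Lemma~\ref{lem:ig-svvalid} and locality of \clo{} yields the result. You correctly identify the equivalence of orders between $H_{im}$ and $H_{ig}$ as the step carrying the real weight, just as in the paper.
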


\begin{theorem}
\label{thm:main-perm}
Let $H_g$ be a history generated by \sgt{} algorithm. Then $H_g$ is in $\permfn{\clo}$. 
\end{theorem}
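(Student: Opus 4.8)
The plan is to observe first that \thmref{main-perm} is the same statement as \thmref{perm}, so it can be discharged directly by that result; for completeness, though, I would re-derive it by contradiction following the same pattern. Condition~(1) of \defref{perm} is immediate, since $H_g$ is in \clo{} by \thmref{main-gen-clo} (equivalently \thmref{gen-clo}). The entire burden is condition~(2): that no aborted transaction of $H_g$ can be turned into a committed one while the history stays in \clo{}.

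So I would assume for contradiction that some $T_a\in\aborted(H_g)$ has a completion $T_a^C$ with $H'_g\in\H^{T_a,C}$ and $H'_g$ in \clo{}. Every history in $\H^{T_a,C}$ is obtained from the prefix $H_g^{T_a}$ by modifying only $T_a$'s terminal event, so $H'_g$ and $H_g$ differ exactly there. Since \clo{} is local, the fact that $H'_g$ is in \clo{} forces the local sub-history $\subs{T_a^C}{H'_g}$ to be \coop{}; hence it is \svvalid{} (by \corref{coop-legal}), and by \thmref{graph} its conflict graph $\cg{\subs{T_a^C}{H'_g}}$ is acyclic.

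Next I would split on the type of $T_a$'s terminal event, exactly as in the proof of \thmref{perm}. If it is $r_a(x,A)$, legality of $\subs{T_a^C}{H'_g}$ pins the value returned by $r_a$ in $T_a^C$ to the argument of $r_a$'s \lastw{} --- which is precisely the value \sgt{} computes for $x$ at $T_a$'s last read; but \sgt{} aborted $T_a$ because inserting the corresponding \rvar{} into $\lchist_a$ made the conflict graph cyclic (the test in line~\ref{acycl:check1} of Algorithm~\ref{algo:read}). Identifying that graph with $\cg{\subs{T_a^C}{H'_g}}$ through \corref{graphs-ia-ig} applied to $H'_g$ contradicts the acyclicity established above. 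If instead the terminal event is $\tryc_a(A)$, the argument is identical: committing $T_a$ corresponds to line~\ref{lin:lchist2} of Algorithm~\ref{algo:tryc}, where \sgt{} found the graph cyclic (the test in line~\ref{acycl:check2}), so \corref{graphs-ia-ig} makes $\cg{\subs{T_a^C}{H'_g}}$ cyclic and \thmref{graph} says it is not \coop{} --- again a contradiction. Either branch refutes the assumption, so condition~(2) holds and $H_g$ is in $\permfn{\clo}$.

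The one genuinely delicate step --- and the one I would treat carefully --- is the use of \corref{graphs-ia-ig} inside the modified history $H'_g$: one must check that $\subs{T_a^C}{H'_g}$ is equivalent to, and agrees on $\prec^{RT}$ and $\prec^{CO}$ with, the concrete $\lchist_a$ that \sgt{} held at $T_a$'s last \memop{}. This rests on $H'_g$ and $H_g^{T_a}$ differing only in $T_a$'s terminal event, on that single change leaving the algorithm's sequence-number bookkeeping untouched, and on \lemref{equiv} and \lemref{orders} applying verbatim to $H'_g$. Once that identification is in place, the case split and the appeals to \thmref{graph} are routine.
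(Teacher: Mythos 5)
Your proposal is correct and follows essentially the same route as the paper: the statement is indeed just a restatement of \thmref{perm}, and your contradiction argument---condition (1) via \thmref{gen-clo}, then the case split on $r_a(x,A)$ versus $\tryc_a(A)$, using \corref{coop-legal} to pin the read value, \corref{graphs-ia-ig} applied to $H'_g$ to transfer the cyclicity of the algorithm's local conflict graph to $\cg{\subs{T_a^C}{H'_g}}$, and \thmref{graph} to conclude non-\coopty---is exactly the paper's proof of \thmref{perm}. Your closing remark about verifying that \lemref{equiv} and \lemref{orders} carry over to $H'_g$ (since it differs from $H_g^{T_a}$ only in $T_a$'s terminal event) is a fair point of care that the paper treats implicitly.
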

Now \thmref{perm_ni} implies that our \sgt{} implementation is \clo-non-interfering.

\begin{theorem}
\label{thm:main-live}
Assuming that no transaction fails while executing the \tryc{} \op{} and $\gchlock$ is starvation-free, every operation of \sgt{} eventually returns. 
\end{theorem}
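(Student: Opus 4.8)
The plan is to observe that \sgt{} contains exactly one potentially blocking step --- the acquisition of the global lock $\gchlock$ inside $\tryc$ --- and to discharge it using the two hypotheses; this is the same statement as Theorem~\ref{thm:live}. First I would walk through the three procedures. In $read_i$ (Algorithm~\ref{algo:read}) the transaction reads $\gchist$ once (a single atomic access to the one global history variable), forms the local copy $\lchist_i$, inserts its \rvar, builds the conflict graph $\cg{\lchist_i}$ and tests it for acyclicity; in $write_i$ (Algorithm~\ref{algo:write}) it merely appends a \wvar{} to $\lchist_i$, consulting $\gchist$ once more only if this is the first operation of $T_i$. None of these steps is a blocking call, and each is a purely local computation over finite data: $\lchist_i$ holds finitely many events, so $\cg{\lchist_i}$ is a finite graph and the cycle test terminates. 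Hence $read_i$ and $write_i$ always return after finitely many steps, regardless of scheduling.

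Next I would turn to $\tryc_i$ (Algorithm~\ref{algo:tryc}), the only \op{} that waits, and argue that $\gchlock$ is never held forever. Once a transaction acquires $\gchlock$, the critical section performs only bounded local work: it copies $\gchist$ into $\thist_i$, merges it into $\lchist_i$, appends $\cop_i(\lseq_i)$, constructs and tests $\cg{\lchist_i}$, and --- if the graph is acyclic --- writes $\lchist_i$ and $\lseq_i$ back into $\gchist$ and $\gseqn$; then it releases the lock. By the hypothesis that no transaction fails while executing $\tryc$, the transaction necessarily reaches the matching release, so the lock is held for only finitely many steps. Combining this with the assumption that $\gchlock$ is starvation-free, every pending acquisition of $\gchlock$ eventually succeeds, so every $\tryc_i$ eventually enters its critical section, computes its outcome in finite time, releases the lock, and returns. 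Since $read_i$, $write_i$ and $\tryc_i$ (and the trivial $\trya_i$) exhaust the operation types of \sgt{}, every operation eventually returns.

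The one place I would be careful about is precisely the claim that the lock is held only for a bounded number of steps, since that rests on the critical section of $\tryc$ containing no unbounded loop and no further blocking. The conflict-graph construction and the acyclicity test are the heaviest part, so I would note explicitly that they run in time polynomial in the (finite) size of $\lchist_i$; everything else in the proof is a direct reading of the pseudocode, so I do not expect a real obstacle here.
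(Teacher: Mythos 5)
Your proposal is correct and follows essentially the same route as the paper's proof of Theorem~\ref{thm:live}: read and write are non-blocking local computations, so the only waiting is on $\gchlock$ in $\tryc$, which is discharged by the no-failure assumption (the lock is always released after finite local work) together with starvation-freedom. Your added remarks on the finiteness of the cycle test just make explicit what the paper leaves implicit.
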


\cmnt {
\vspace{1mm}
\noindent
\textit{Garbage Collection.} Over time, the history of committed transactions maintained by our \sgt{} algorithm in the global variable  $\gchist$ grows without bound. To check the size of \gchist, we have developed a simple garbage-collection scheme that allows to keep the size of $\gchist$ proportional to the current contention, i.e, to the number of concurrently live transactions. The procedure has been described in \cite{KP12:TR}. The idea is to periodically remove from $\gchist$ the sub-histories corresponding to committed transactions that become \emph{obsolete}, i.e., the effect of them can be reduced to the updates of t-objects.
}

\subsection{Garbage Collection}
\label{subsec:garbage}

\cmnt {
\newcommand{\liveset} {\id{liveSet}}
\newcommand{\primary} {primary}
\newcommand{\glset} {gIncSet}
\newcommand{\waits} {waitSet}
\newcommand{\dset} {dependSet}
\newcommand{\sstate} {system-state}
}

Over time, the history of committed transactions maintained by our \sgt{}
algorithm in the global variable  $\gchist$ grows
without bound. We now describe a simple garbage-collection
scheme that allows to keep the size of $\gchist$ proportional to the
current contention, i.e, to the number of concurrently live transactions.
The idea is to periodically remove from $\gchist$ the sub-histories
corresponding to committed transactions that become \emph{obsolete}, i.e.,
the effect of them can be reduced to the updates of t-objects.

More precisely, a transaction $T_i$'s \emph{\liveset} is the set of
the transactions that were incomplete when $T_i$ terminated. 
A t-complete transaction $T_i$ is said to be \obsolete{} (in a history $H$) if all the transactions 
in its \liveset{} have terminated (in $H$).

To make sure that obsolete transactions can be correctly identified
based on the global history {\gchist}, we update our algorithm as
follows. When a transaction performs its first operation, it grabs the
lock on {\gchist} and inserts the operation in it. 
%
Now when a transaction commits it takes care of all committed
transactions in \gchist{} which have become
\obsolete. 
All read operations preceding the last event of an  {\obsolete}
transaction are removed,   
In case there are multiple \obsolete{} transactions writing to the
same t-object, only the writes of the last such {\obsolete}
transaction to commit are kept in the history.
If an {\obsolete} transaction is not the latest to commit an update on
any t-object, all events of this transactions are removed.   

In other words, $H_{im}$ defined as the local history $\lchist_i$ computed
by \sgt{} within the last complete \memop{} of $T_i$ in the updated algorithm
(which corresponds to line~\ref{lin:lchist1} of Algorithm~\ref{algo:read} and
line~\ref{lin:lchist2} of Algorithm~\ref{algo:tryc}) preserves write
and commit events of the  latest {\obsolete} transaction to commit a
value for every t-object. All other events of other {\obsolete} transactions
are removed.
The computed history $H_{im}$ is written back to  $\gchist$ in line~\ref{lin:update} of Algorithm~\ref{algo:read}.   

Let this $\gchist$ be used by a transaction $T_i$ in checking the
correctness of the current local history (line~\ref{acycl:check1} of
Algorithm~\ref{algo:read} or line~\ref{acycl:check2} of
Algorithm~\ref{algo:tryc}). 
Recall that $H_{ig}$ denotes the corresponding local history of $T_i$.
Let $T_{\ell}$ be any obsolete transaction in $H_{ig}$.
Note that all transactions that committed before $T_{\ell}$ in
$H_{ig}$ are also obsolete in $H_{ig}$, and let $U$ denote the
set of all these obsolete transactions, including $T_{\ell}$.
Respectively, let $\id{obs}(H_{ig},U)$ be a prefix of $H_{ig}$
in which all transactions in $\id{\liveset}(T_{\ell})$ are  complete.
Also, let $\id{trim}(H_{ig},U)$ be the ``trimmed'' local history of $T_i$
where all transactions in $U$ are removed or replaced with committed
updates, as described above.
We can show that $H_{ig}$ is in {\clo} if and only if $\id{obs}(H_{ig},U)$ and $\id{trim}(H_{ig},U)$ are in {\clo}. 

\cmnt{ 
\begin{lemma} 
\label{lem:trimmed}   
$H_{ig}$ is in {\clo} if and only if $\id{obs}(H_{ig},U)$ and $\id{trim}(H_{ig},U)$ are in {\clo}. 
\end{lemma}

\begin{proof}
(Only if)
Suppose that $H_{ig}$ is in {\clo}.
By \corref{coop-legal}, $H_{ig}$ is legal.
Since $\id{obs}(H_{ig},U)$ is a prefix of $H_{ig}$, it is also legal,
and its conflict
graph is a sub-graph of $CG(H_{ig}$. By Theorem~\ref{thm:graph}, 
$CG(\id{obs}(H_{ig},U))$ is acyclic and, thus, $\id{obs}(H_{ig},U)$ is
in {\clo}.

Further, let $r_k(x,v)$ be any read operation in
$\id{trim}(H_{ig},U)$.
Since $H_{ig}$ is legal,  $r_k(x,v)$ is also legal.
Note that since no read operation of {\obsolete} transactions in
$H_{ig}$ appears in $\id{trim}(H_{ig},U)$, $T_k$ is not in
$U$.  
Let $c_m$ be $r_k(x,v)$ 's \textit{\lastw{}} in $H_{ig}$. 
If $c_m$ appears in  $\id{trim}(H_{ig},U)$, then 
$c_m$ is also $r_k(x,v)$ 's \textit{\lastw{}} in
$\id{trim}(H_{ig},U)$, and, thus, $r_k(x,v)$ is also legal.    
Now, suppose, by contradiction, that $c_m$ does not appear in
$\id{trim}(H_{ig},U)$, i.e., $c_m$ is not the last ({\obsolete})
transaction in $U$ to commit a value on $x$, i.e., there exists a
transaction $T_s\in U$ writing to $x$ such that $c_s$ appears after
$c_m$  in $H_{ig}$.
Since $c_m$ is $r_k(x,v)$ 's \textit{\lastw{}} in $H_{ig}$, $c_s$
appears after $r_k(x,v)$ in $H_{ig}$. 
But $T_s$ is {\obsolete}, and, thus, no read operation 
can appear before $c_s$ in $\id{trim}(H_{ig},U)$---a contradiction.   
Thus, $c_m$ is $r_k(x,v)$ 's \textit{\lastw{}} in
$\id{trim}(H_{ig},U)$, and, hence, $\id{trim}(H_{ig},U)$ is legal.

Since  $\id{trim}(H_{ig},U)$ is a legal sub-sequence of $H_{ig}$,
$CG(\id{trim}(H_{ig},U))$ is a sub-graph of $CG(H_{ig})$ and, by Theorem~\ref{thm:graph}, 
$CG(\id{trim}(H_{ig},U))$ is acyclic and $\id{trim}(H_{ig},U)$ is in {\clo}.

\noindent
(If) Suppose now that $\id{obs}(H_{ig},U)$ and $\id{trim}(H_{ig},U)$ are in {\clo}. 
By  \corref{coop-legal}, both histories are legal, and, by
Theorem~\ref{thm:graph}, produce acyclic conflict graphs.
Immediately, every read operation in $H_{ig}$ that also appears in
$\id{obs}(H_{ig},U)$ is legal.
By the arguments above, the {\lastw} for every read operation in
$\id{trim}(H_{ig},U)$ is also its {\lastw} in $H_{ig}$.
Thus, $H_{ig}$ is legal.

Recall that $H_{ig}$ can be represented as $\id{trim}(H_{ig},U)$ with
read events of transactions in $U$ inserted in accordance to its prefix  $\id{obs}(H_{ig},U)$.
Thus, 
$CG(H_{ig})$ can be represented as $CG(\id{trim}(H_{ig},U))$ 
with several additional edges directed to and from transactions in
$U$.

Suppose, by contradiction that $CG(H_{ig})$ contains a cycle $C$. 
Since $CG(\id{trim}(H_{ig},U))$ is acyclic, $C$ must contain an edge
directed to or from a transaction in $U$ that does not appear in $CG(\id{trim}(H_{ig},U))$.   

Thus, we can represent the cycle $C$ as
$T_{i_1},T_{i_2},\ldots,T_{i_k}$, where $T_{i_1}=T_{i_k}\in U$ and for all
$j=1,\ldots,k-1$, $(T_{i_j},T_{i_{j+1}})\in CG(H_{ig})$. 
Since  $CG(\id{obs}(H_{ig},U))$ is acyclic $C$ must contain an edge
that does not appear in $CG(\id{obs}(H_{ig},U))$. 
Let $T_{i_j}$ be the latest transaction in
$T_{i_2},\ldots,T_{i_k}$ such that $(T_{i_{j-1}},T_{i_j})\notin
CG(\id{obs}(H_{ig},U))$.

Note that $j\neq k$. This is because $T_{i_{k-1}}$
must precede or be concurrent to $T_{i_{k}}$ in $H_{ig}$. 
Since $T_{i_{k}}\in U$, by the construction of $\id{obs}(H_{ig},U)$,
$T_{i_{k-1}}$ must have committed in $\id{obs}(H_{ig},U)$.
But then $(T_{i_{j-1}},T_{i_j})\in CG(\id{obs}(H_{ig},U))$---a contradiction.

Now, since $(T_{i_{j-1}},T_{i_j})\notin
CG(\id{obs}(H_{ig},U))$, $T_{i_j}$ cannot be complete in
$\id{obs}(H_{ig},U)$.   
Again, by the construction of $\id{obs}(H_{ig},U)$, no transaction
that is not complete in $\id{obs}(H_{ig},U)$ can begin before
$T_{i_k}\in U$ completes. 
Hence, $T_{i_k}$ precedes $T_{i_j}$ in the real-time order and, since
$T_{i_j}$ also appears in $\id{obs}(H_{ig},U)$ , $(T_{i_{k}},T_{i_j})\in
CG(\id{obs}(H_{ig},U))$. 
Thus, $CG(\id{obs}(H_{ig},U))$ contains a cycle $T_{i_{k}},T_{i_j},
T_{i_{j+1}}\ldots,T_{i_k}$---a contradiction. 

Thus, $CG(H_{ig})$ is acyclic and, by Theorem~\ref{thm:graph}, $H_{ig}$
is in {\clo}. \qed
\end{proof}
}

Iteratively, for each $T_i$, all our earlier claims on the relation
between the actual local history $H_{ig}$  and the locally constructed
history $H_{im}$ 
hold now for the ``trimmed'' history $\id{trim}(H_{ig},U)$ and $H_{im}$.
Therefore, 
$H_{im}$ is in {\clo} if and only if
$H_{ig}$ is in {\clo}. Hence, every history $H_g$ generated
by the updated algorithm with garbage collection is \clo-permissive (and,
thus, \clo-non-interfering).

Note that removing obsolete transactions from  $\gchist$ essentially boils down to
dropping a prefix of it that is not concurrent to any live
transactions.
As a result, the length of $\gchist$ is $O(M+C)$, where $M$ is the
number of t-objects and $C$ is the
upper bound on the number of concurrent transactions.
A complete correctness proof for the optimized algorithm is given in~\cite{KP12:TR}.   

\cmnt {

should transformed into \primary{} write \op{s}. Suppose $T_i$ is the last transaction to write $v$ to t-object $x$ before getting removed from \gchist{} (i.e. garbage collected). Then, any transaction $T_j$ performing a read on $x$ will read $v$ (after $T_i$ has been removed) from the primary write \op{s}.

It must be noted that \primary{} write \op{s} are the write \op{s} performed by $T_0$ when the STM system starts. These \op{s} initialize all the variables ever used by the STM system (with 0). By transforming the writes of $T_i$ into \primary{} write \op{s} before removing it, the writes of $T_i$ seem to be the initial writes for transactions that begin later.

It must also be noted that for each \tobj{} $x$, there is only one \primary{} write \op. Thus if at some point, transaction $T_i$ becomes garbage and its writes are transformed into \primary{} writes. Later when another transaction $T_j$ becomes garbage then, $T_j$'s writes are also transformed into \primary{} writes. If a \tobj{} $x$ is common to \Wset{s} of both $T_i$ and $T_j$ then $T_i$'s \primary{} write on $x$ is overwritten by $T_j$. 

It is possible that multiple transactions might become garbage at the same instant. Then, the values in the \Wset{s} of these transactions are transformed into \primary{} writes in commit order. The following example illustrates this: $H2: r_1(y, 0) w_2(x, 5) C_2 w_3(x, 10) C_3 r_1(z, 0) C_1 r_4(x,?)$. In this history, when $T_1$ terminates both $T_2, T_3$ become garbage transactions. Then, $T_2$'s write of 5 onto $x$ will first be into \primary{} write \op. Then, $T_3$'s write of 10 onto $x$ is transformed into \primary{} write (since $T_3$ committed later). Thus the future transaction $T_4$ (w.r.t $T_2, T_3$) wanting to read $x$ will read 10. This way \legality{} of $H_g$ is maintained. 

We use the following structures to implement these rules. In our description of data-structures which are dynamically changing with time, we use the term \emph{\sstate{}} which captures the state of the system at any instant in time which includes the state of all the system variables.

\vspace{1mm}
\noindent
\textit{\glset:} This is a global set accessible to all transactions and hence prefixed with the letter `g'. In any \sstate{}, this set captures the set of transactions that are incomplete. When a transaction begins, it is added to this list. When it terminates, it is removed from this list. 

\vspace{1mm}
\noindent
\textit{\liveset:} This is a transaction specific set. For a transaction $T_i$, this set consists of all the transactions that were incomplete when $T_i$ terminated. Further, this set changes with due course of execution. Suppose a transaction $T_j$ is added to $T_i$'s \liveset{} when it terminated. In a later \sstate{} when $T_j$ terminates, $T_j$ is removed from this set. 

\cmnt { 
\vspace{1mm}
\noindent
\textit{\waits:} This is a transaction specific set. For a transaction $T_i$ in a given \sstate{} $G$, this set consists of all the transactions in its \liveset{} that are still incomplete. Thus this set is the intersection of $T_i$'s \liveset{} and \glset{} in $G$.
}

\vspace{1mm}
\noindent
\textit{\dset:} This is a transaction specific set maintained as a FIFO queue. For an incomplete transaction $T_i$, a transaction $T_j$ is added to $T_i$'s queue when $T_j$ terminates and $T_i$ is incomplete. In other words, in a \sstate{} $G$, if $T_i$ is in then $T_j$'s \waits{} then $T_j$ is in $T_i$'s \liveset{} and vice-versa. 

The algorithm works as follows: When a transaction $T_i$ terminates, all the transactions in \glset{} are added to $T_i$'s \liveset. For each transaction $T_j$ in $T_i$'s \liveset{}, $T_i$ is added to $T_j$'s \dset{} queue. When $T_j$ terminates, it identifies the top transaction of the its \dset, say $T_i$. $T_j$ removes itself from $T_i$'s \liveset. It then checks it $T_i$'s if \liveset{} is empty. If so, $T_i$ is garbage. 

Then, the writes of $T_i$ are transformed into \primary{} writes and the \op{s} of $T_i$ are removed from \gchist. This is how garbage collection is performed. 
}

\section{Concluding remarks}
\label{sec:conc}

In this paper, we explored the notion of
non-interference in transactional memory, 
originally highlighted in~\cite{SatVid:2011:ICDCN, SatVid:2012:ICDCN}.
We focused on  $P$-non-interference that 
grasps the intuition that no transaction aborts because of aborted or incomplete
transactions in the sense that by removing some of aborted or incomplete
transactions we cannot turn a previously aborted transaction into a
committed one 
without violating the given correctness criterion $P$.
We showed that no TM implementation can provide
opacity-non-interference.
However, we observed that any permissive implementation of a local 
correctness criterion is also non-interfering.
Informally, showing that a history is locally correct is equivalent
to showing that every its local sub-history is correct.
We discussed two local criteria: virtual-world
consistency (VWC)~\cite{ImbsRay:2009:SIROCCO} and the (novel) local
opacity (LO).
Unlike VWC, LO does not allow a transaction that is
doomed to abort to waste system resources.
TMS1~\cite{DGLM13} was recently proposed as a candidate for the 
``weakest reasonable''  TM correctness criterion. 
Interestingly, at least for the case of atomic transactional
operations, LO seems to coincide with TMS1.   

We then considered \clo{}, a restriction of LO that, in addition, requires every
local serialization to respect the conflict
order~\cite{Papad:1979:JACM,Hadzilacos:1988:JACM} of the original sub-history. 
We presented a permissive, and thus non-interfering, \clo{} implementation.  
This appears to be the only non-trivial permissive implementation known so far 
(the VWC implementation in~\cite{CIR11} is only probabilistically permissive).  

Our definitions and our implementation intend to build a ``proof of concept''
for non-interference and are, by intention, as simple as possible (but not
simpler). Of course, interesting directions are to consider a more
realistic notion of non-interference as a characteristics of an
implementation, to extend our
definitions to non-sequential histories, and to
relax the strong ordering requirements in our correctness criteria.
Indeed, the use of the conflict order allowed us to efficiently relate 
correctness  of a given history to the absence of cycles in its graph
characterization. Respecting conflict order makes a lot of sense if we aim at
permissiveness, as efficient verification of strict
serializability or opacity appear elusive~\cite{Papad:1979:JACM}.
But it may be too strong as a requirement for less demanding implementations.

Also, our implementation is quite simplistic in the sense that 
it uses one global lock to protect the history of
committed transactions and, thus, it is not
disjoint-access-parallel (DAP)~\cite{israeli-disjoint,AHM09}.
An interesting challenge is to check if it is possible to construct a permissive
DAP \clo{} implementation with invisible reads. 

\bibliography{citations}

\begin{thebibliography}{10}

\bibitem{AHM09}
H.~Attiya, E.~Hillel, and A.~Milani.
\newblock Inherent limitations on disjoint-access parallel implementations of
  transactional memory.
\newblock In {\em Proceedings of the twenty-first annual symposium on
  Parallelism in algorithms and architectures}, SPAA '09, pages 69--78, New
  York, NY, USA, 2009. ACM.

\bibitem{CIR11}
T.~Crain, D.~Imbs, and M.~Raynal.
\newblock Read invisibility, virtual world consistency and probabilistic
  permissiveness are compatible.
\newblock In {\em ICA3PP (1)}, pages 244--257, 2011.

\bibitem{norec}
L.~Dalessandro, M.~F. Spear, and M.~L. Scott.
\newblock Norec: streamlining stm by abolishing ownership records.
\newblock In {\em PPOPP}, pages 67--78, 2010.

\bibitem{tinystm}
P.~Felber, C.~Fetzer, P.~Marlier, and T.~Riegel.
\newblock Time-based software transactional memory.
\newblock {\em IEEE Trans. Parallel Distrib. Syst.}, 21(12):1793--1807, 2010.

\bibitem{Guer+:disc:2008}
R.~Guerraoui, T.~Henzinger, and V.~Singh.
\newblock Permissiveness in transactional memories.
\newblock In {\em DISC~'08: Proc. 22nd International Symposium on Distributed
  Computing}, pages 305--319, sep 2008.
\newblock Springer-Verlag Lecture Notes in Computer Science volume 5218.

\bibitem{GuerKap:2008:PPoPP}
R.~Guerraoui and M.~Kapalka.
\newblock On the correctness of transactional memory.
\newblock In {\em PPoPP '08: Proceedings of the 13th ACM SIGPLAN Symposium on
  Principles and practice of parallel programming}, pages 175--184, New York,
  NY, USA, 2008. ACM.

\bibitem{tm-book}
R.~Guerraoui and M.~Kapalka.
\newblock {\em Principles of Transactional Memory,Synthesis Lectures on
  Distributed Computing Theory}.
\newblock Morgan and Claypool, 2010.

\bibitem{Hadzilacos:1988:JACM}
V.~Hadzilacos.
\newblock A theory of reliability in database systems.
\newblock {\em J. ACM}, 35(1):121--145, Jan. 1988.

\bibitem{ImbsRay:2009:SIROCCO}
D.~Imbs and M.~Raynal.
\newblock A versatile {STM} protocol with invisible read operations that
  satisfies the virtual world consistency condition.
\newblock In {\em Proceedings of the 16th international conference on
  Structural Information and Communication Complexity}, SIROCCO'09, pages
  266--280, Berlin, Heidelberg, 2010. Springer-Verlag.

\bibitem{israeli-disjoint}
A.~Israeli and L.~Rappoport.
\newblock Disjoint-access-parallel implementations of strong shared memory
  primitives.
\newblock In {\em Proceedings of the thirteenth annual ACM symposium on
  Principles of distributed computing}, PODC '94, pages 151--160, New York, NY,
  USA, 1994. ACM.

\bibitem{KP12:TR}
P.~Kuznetsov and S.~Peri.
\newblock Non-interference and locality in transactional memory.
\newblock {\em CoRR}, abs/1211.6315, 2012.

\bibitem{KR:2011:OPODIS}
P.~Kuznetsov and S.~Ravi.
\newblock On the cost of concurrency in transactional memory.
\newblock In {\em OPODIS}, pages 112--127, 2011.

\bibitem{Papad:1979:JACM}
C.~H. Papadimitriou.
\newblock The serializability of concurrent database updates.
\newblock {\em J. ACM}, 26(4):631--653, 1979.

\bibitem{SatVid:2011:ICDCN}
S.~Peri and K.Vidyasankar.
\newblock Correctness of concurrent executions of closed nested transactions in
  transactional memory systems.
\newblock In {\em 12th International Conference on Distributed Computing and
  Networking}, pages 95--106, 2011.

\bibitem{SatVid:2012:ICDCN}
S.~Peri and K.Vidyasankar.
\newblock An efficient scheduler for closed nested transactions that satisfies
  all-read-consistency and non-interference.
\newblock In {\em 13th International Conference on Distributed Computing and
  Networking}, 2012.

\bibitem{spear+:spaa:ringstm:2008}
M.~F. Spear, M.~M. Michael, and C.~von Praun.
\newblock Ringstm: scalable transactions with a single atomic instruction.
\newblock In {\em Proceedings of the twentieth annual symposium on Parallelism
  in algorithms and architectures}, SPAA '08, pages 275--284, 2008.

\bibitem{WeiVoss:2002:Morg}
G.~Weikum and G.~Vossen.
\newblock {\em Transactional Information Systems: Theory, Algorithms, and the
  Practice of Concurrency Control and Recovery}.
\newblock Morgan Kaufmann, 2002.

\end{thebibliography}

\appendix


\section{Appendix}

\subsection{Graph characterization of \coopty}
\label{subsec:ap-graph}

In the following lemmas, we show that the graph characterization indeed helps us verify the membership in \coopty. Note, since $\txns(H)=\txns(\overline{H})$ and ($\prec_H^{RT}\cup\prec_H^{CO})=(\prec_{\overline{H}}^{RT}\cup\prec_{\overline{H}}^{CO}$), we have $\cg{H} = \cg{\overline{H}}$. 

\begin{lemma}
\label{lem:ap-co-eq}
Consider two histories $H1$ and $H2$ such that $H1$ is equivalent to
$H2$ and $H1$ respects conflict order of $H2$, i.e., $\prec_{H1}^{CO} \subseteq \prec_{H2}^{CO}$. Then, $\prec_{H1}^{CO} = \prec_{H2}^{CO}$. 
\end{lemma}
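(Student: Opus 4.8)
The hypothesis already gives one inclusion, $\prec_{H1}^{CO}\subseteq\prec_{H2}^{CO}$, so the whole content of the lemma is the reverse inclusion $\prec_{H2}^{CO}\subseteq\prec_{H1}^{CO}$. The key starting point is that $H1$ and $H2$, being equivalent, consist of the same events; in particular they have the same read and $\tryc$ events and the same read/write sets. Hence the same pairs of operations are ``in conflict'' in both histories --- the defining conditions of the w-w, w-r and r-w patterns refer only to which operations occur and to the read/write sets, not to their order --- and each conflicting pair of operations contributes exactly one directed edge to $\prec_H^{CO}$, whose direction is determined by the relative order of the two operations in $H$.

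Concretely, I would fix an arbitrary edge $(T_p,T_q)\in\prec_{H2}^{CO}$ and pick a pair of conflicting operations witnessing it: $o_p\in\evts{T_p}$, $o_q\in\evts{T_q}$ with $o_p<_{H2}o_q$, where $(o_p,o_q)$ is one of $(\tryc_p(C),\tryc_q(C))$, $(\tryc_p(C),r_q(x,v))$ or $(r_p(x,v),\tryc_q(C))$, together with the matching membership condition on $Wset(T_p)$ or $Wset(T_q)$. By equivalence, these two operations and the accompanying set condition are present in $H1$ as well, so it suffices to show $o_p<_{H1}o_q$: that alone places $(T_p,T_q)$ in $\prec_{H1}^{CO}$, and since $(T_p,T_q)$ is arbitrary it finishes the proof. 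I would establish $o_p<_{H1}o_q$ by contradiction. If instead $o_q<_{H1}o_p$, then the same pair read in the opposite direction is still a conflicting pair (the w-r and r-w patterns interchange, and the w-w pattern is symmetric), so it witnesses $(T_q,T_p)\in\prec_{H1}^{CO}$; the hypothesis $\prec_{H1}^{CO}\subseteq\prec_{H2}^{CO}$ then forces $(T_q,T_p)\in\prec_{H2}^{CO}$ as well.

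It remains to see that $(T_p,T_q)$ and $(T_q,T_p)$ cannot both lie in $\prec_{H2}^{CO}$ --- and this is the step I expect to be the main obstacle, because $\prec^{CO}$ restricted to a single pair of transactions is \emph{not} antisymmetric in general: a history can genuinely contain a write--write conflict one way and, say, a read--write conflict the other. To rule this out one has to exploit more structure --- the well-formedness facts that within any transaction all read events precede its $\tryc$ event and a transaction has a unique commit event, which (traced through the chains of conflicting reads and writes involved, together with uniqueness of written values and \svvalidity{}) force a consistent ordering --- and/or the acyclicity of $\cg{H2}$ that is available whenever this lemma is invoked (cf.\ \Thmref{graph}), which contradicts such a $2$-cycle outright. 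Once the case $o_q<_{H1}o_p$ is excluded we have $o_p<_{H1}o_q$, hence $(T_p,T_q)\in\prec_{H1}^{CO}$, giving $\prec_{H2}^{CO}\subseteq\prec_{H1}^{CO}$ and thus $\prec_{H1}^{CO}=\prec_{H2}^{CO}$.
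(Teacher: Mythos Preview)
Your overall contradiction strategy is exactly the paper's: pick a conflicting pair, suppose it is ordered the wrong way in $H1$, push the reversed edge through the inclusion hypothesis into $H2$, and derive a clash. The paper, however, keeps $p$ and $q$ as \emph{events} throughout and reads the final clash as $p<_{H2}q$ together with $q<_{H2}p$, which is impossible simply because $<_{H2}$ is a total order on events; no antisymmetry of a transaction-level relation is invoked. Your ``main obstacle'' appears only because you lift back to the transaction pair $(T_p,T_q)$ before drawing the contradiction.

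That said, your worry is not misplaced: the hypothesis $\prec_{H1}^{CO}\subseteq\prec_{H2}^{CO}$ is stated on transactions, so pushing the reversed pair through it only yields $T_q\prec_{H2}^{CO}T_p$, not $o_q<_{H2}o_p$ for your specific witnesses---the paper glosses over the very same step. Read strictly at the transaction level the lemma is actually false: with $T_1=r_1(x,0)\,w_1(x,v_1)\,\tryc_1(C)$, $T_2=w_2(x,v_2)\,\tryc_2(C)$, $H2:\;r_1\;w_2\;\tryc_2\;w_1\;\tryc_1$ and $H1:\;w_2\;\tryc_2\;r_1\;w_1\;\tryc_1$, one gets $\prec_{H1}^{CO}=\{(T_2,T_1)\}\subsetneq\{(T_1,T_2),(T_2,T_1)\}=\prec_{H2}^{CO}$. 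This shows that neither of your proposed repairs can rescue the lemma as stated: the counterexample is well-formed (so the read-before-$\tryc$ structure alone does not force antisymmetry), and appealing to acyclicity of $\cg{H2}$ would be circular, since this lemma is precisely what underlies \thmref{ap-graph}. The clean resolution is to interpret $\prec^{CO}$ at the event level---which is also how it is used downstream in \lemref{ap-eqv-legal}---so that both the inclusion hypothesis and the contradiction live on the same event pair $(o_p,o_q)$; under that reading your argument goes through and the antisymmetry issue never arises.
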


\begin{proof}
Here, we have that $\prec_{H1}^{CO} \subseteq \prec_{H2}^{CO}$. In order to prove $\prec_{H1}^{CO} = \prec_{H2}^{CO}$, we have to show that $\prec_{H2}^{CO} \subseteq \prec_{H1}^{CO}$. We prove this using contradiction. Consider two events $p,q$ belonging to transaction $T1,T2$ respectively in $H2$ such that $(p,q) \in \prec_{H2}^{CO}$ but $(p,q) \notin \prec_{H1}^{CO}$. Since the events of $H2$ and $H1$ are same, these events are also in $H1$. This implies that the events $p, q$ are also related by $CO$ in $H1$. Thus, we have that either $(p,q) \in \prec_{H1}^{CO}$  or $(q,p) \in \prec_{H1}^{CO}$. But from our assumption, we get that the former is not possible. Hence, we get that $(q,p) \in \prec_{H1}^{CO} \Rightarrow (q,p) \in \prec_{H2}^{CO}$. But we already have that $(p,q) \in \prec_{H2}^{CO}$. This is a contradiction. \qed
\end{proof}

\begin{lemma}
\label{lem:ap-eqv-legal}
Let $H1$ and $H2$ be equivalent histories such that 
$\prec_{H1}^{CO} = \prec_{H2}^{CO}$. 
Then $H1$ is \legal{} iff $H2$ is \legal. 
\end{lemma}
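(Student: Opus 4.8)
The plan is to prove both implications at once, since the hypotheses are symmetric in $H1$ and $H2$; thus it suffices to show that if $H1$ is \legal{} then $H2$ is \legal. Fix an arbitrary successful read $r_k(x,v)$ of $H2$. As $H1$ and $H2$ have the same events, $r_k(x,v)$ also occurs in $H1$ and is successful there, so I may use that it is \legal{} in $H1$. I would first invoke the standing assumption that written values are unique: the value $v$ singles out one transaction $T_i$ with $w_i(x,v)\in\evts{T_i}$, and this $T_i$ is the same for both histories. Since $H1$ is \legal, $r_k(x,v)$'s \lastw{} in $H1$ is exactly $c_i$; in particular $T_i$ is committed, $c_i<_{H1}r_k(x,v)$, and no transaction writing $x$ commits strictly between $c_i$ and $r_k(x,v)$ in $H1$. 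By the same uniqueness, $r_k(x,v)$ is \legal{} in $H2$ if and only if its \lastw{} in $H2$ is this same $c_i$. So the lemma reduces to transferring those three facts from $H1$ to $H2$.

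Two of the three are immediate. The set $W_x$ of committed transactions writing $x$ depends only on the common event set, hence is the same in both histories; and for $T_p,T_q\in W_x$ we have $c_p<_{H}c_q \iff T_p\prec_H^{CO}T_q$ by the w-w clause, so, since $\prec_{H1}^{CO}=\prec_{H2}^{CO}$, the commit order restricted to $W_x$ is also the same. It therefore remains only to place $r_k(x,v)$ correctly among the commits of $W_x$ in $H2$: writing $T_{i^+}$ for the commit-order successor of $T_i$ in $W_x$ (if any), I need $c_i<_{H2}r_k(x,v)$ and $r_k(x,v)<_{H2}c_{i^+}$. For this I would push the remaining two facts through the other clauses of the conflict order: $c_i<_{H1}r_k(x,v)$ with $x\in Wset(T_i)$ and $v\neq A$ gives $T_i\prec_{H1}^{CO}T_k$, hence $T_i\prec_{H2}^{CO}T_k$; and $r_k(x,v)<_{H1}c_{i^+}$ with $x\in Wset(T_{i^+})$ gives $T_k\prec_{H1}^{CO}T_{i^+}$, hence $T_k\prec_{H2}^{CO}T_{i^+}$.

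The step I expect to be the real obstacle is converting these transaction-level relations back into the two event orderings in $H2$, because $\prec^{CO}$ is a relation on transactions that discards which t-object and which event witnessed a conflict. Concretely, suppose toward a contradiction that $r_k(x,v)<_{H2}c_i$; the r-w clause then gives $T_k\prec_{H2}^{CO}T_i$, so $\prec_{H1}^{CO}=\prec_{H2}^{CO}$ contains both $T_i\to T_k$ and $T_k\to T_i$. I would case on which clause realizes $T_k\prec_{H1}^{CO}T_i$ inside $H1$. If it is w-w ($c_k<_{H1}c_i$) or w-r ($c_k<_{H1}r_i(\cdot)<_{H1}c_i$), then, combined with well-formedness (every read of $T_k$ precedes its commit), this forces $c_k<_{H1}c_i$ and hence $r_k(x,v)<_{H1}c_k<_{H1}c_i$, contradicting $c_i<_{H1}r_k(x,v)$. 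The remaining r-w possibility — the witness being a read of $T_k$ other than $r_k(x,v)$, which does not contradict $c_i<_{H1}r_k(x,v)$ outright — is the crux, and handling it cleanly is where the careful argument (leaning on well-formedness and on the order of events within each transaction) is needed; the claim $r_k(x,v)<_{H2}c_{i^+}$ is then symmetric. Everything else — the symmetry reduction, the uniqueness argument, the invariance of $W_x$ together with its commit order, and the reformulation of \legality{} — is routine.
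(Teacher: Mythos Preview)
Your overall plan matches the paper's: reduce by symmetry, fix a successful read, locate its \lastw{} $c_i$ in $H1$, and argue that $c_i$ remains its \lastw{} in $H2$. The paper compresses your difficult step into the single assertion ``as $\prec_{H1}^{CO}=\prec_{H2}^{CO}$, we get $c_i<_{H2}r_j(x,v)$,'' and you are right that this is where the real content lies: the relation $\prec^{CO}$ is defined on \emph{transactions}, so equality of $\prec_{H1}^{CO}$ and $\prec_{H2}^{CO}$ does not by itself fix the relative order of the \emph{events} $c_i$ and $r_k(x,v)$ in $H2$.

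The r-w sub-case you isolate is not merely delicate; with the paper's definitions it cannot be closed, because the lemma as literally stated is false. Take $T_i$ with $\Wset(T_i)=\{x,y,z\}$ writing $v,u,w$ respectively, and $T_k$ performing $r_k(y,0),\,r_k(x,v),\,r_k(z,w)$ in that order. Let
\[
H1=w_i(x,v)\;w_i(y,u)\;w_i(z,w)\;r_k(y,0)\;c_i\;r_k(x,v)\;r_k(z,w),
\]
\[
H2=w_i(x,v)\;w_i(y,u)\;w_i(z,w)\;r_k(y,0)\;r_k(x,v)\;c_i\;r_k(z,w).
\]
These are equivalent (same event set, even the same $H|T$ for every $T$), $H1$ is \legal{}, and $H2$ is not (the \lastw{} of $r_k(x,v)$ in $H2$ is $c_0$). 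Yet $\prec_{H1}^{CO}=\prec_{H2}^{CO}$: in both histories one has $T_i\prec^{CO}T_k$ (witnessed by w-r on $x$ in $H1$ and on $z$ in $H2$) and $T_k\prec^{CO}T_i$ (witnessed by r-w on $y$ in both), and the pairs involving $T_0$ coincide trivially. So no appeal to well-formedness or per-transaction event order will rescue the step; the paper's own proof simply glosses over the same point. What is actually needed---and what holds in every application the paper makes of this lemma---is the \emph{event-level} hypothesis that each pair of conflicting operations is ordered the same way in $H1$ and $H2$; under that strengthened premise your argument goes through without any case analysis.
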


\begin{proof}
It is enough to prove the `if' case, and the `only if' case will follow
from symmetry of the argument. 
Suppose that $H1$ is \legal{}. 
By contradiction, assume that $H2$ is not \legal, i.e.,
there is a read \op{} $r_j(x,v)$ (of transaction $T_j$) in $H2$ with
\lastw{} as $c_k$ (of transaction $T_k$) and $T_k$ writes $u \neq v$
to $x$, i.e $w_k(x, u) \in \evts{T_k}$. 
Let $r_j(x,v)$'s \lastw{} in $H1$ be $c_i$ of $T_i$. 
Since $H1$ is legal, $T_i$ writes $v$ to $x$, i.e $w_i(x, v) \in
\evts{T_i}$. 

Since $\evts{H1} = \evts{H2}$, we get that $c_i$ is also in $H2$, and
$c_k$ is also in $H1$. 
As $\prec_{H1}^{CO} = \prec_{H2}^{CO}$,
we get $c_i <_{H2} r_j(x, v)$ and $c_k <_{H1} r_j(x, v)$. 

Since $c_i$ is the \lastw{} of $r_j(x,v)$ in $H1$ we derive that 
$c_k <_{H1} c_i$ and, thus, $c_k <_{H2} c_i <_{H2} r_j(x, v)$.
But this contradicts the assumption that $c_k$ is the \lastw{} of
$r_j(x,v)$ in $H2$.
Hence, $H2$ is legal. \qed
\cmnt{
from the w-r conflict order we get that 
\begin{equation}
\label{eq:wr}
c_i <_{H2} r_j(x, v)
\end{equation}

Now, we have two cases based on the ordering of $c_k$ and $r_j$ in $H1$:

\begin{itemize}
\item[] Case (1) $r_j(x, v) <_{H1} c_k$: Since $\prec_{H1}^{CO} = \prec_{H2}^{CO}$, from r-w conflict order we get that $r_j <_{H2} c_k$. This implies that $c_k$ can not be $r_j$'s \lastw{} in $H2$ which is a contradiction. 

\item[] Case (2) $c_k <_{H1} r_j(x, v)$: From w-r conflict order equivalence of $\prec_{H1}^{CO} = \prec_{H2}^{CO}$, we get that $c_k <_{H2} r_j$. Here, we again have two cases based on the ordering of $c_i$ and $c_k$ in $H1$, Case (2.1) $c_i <_{H1} c_k$: In this case, we have that $c_i <_{H1} c_k <_{H1} r_j$ which implies that $c_i$ is not $r_j$'s \lastw{} in $H2$, a contradiction. Case (2.2) $c_k <_{H1} c_i$: From w-w conflict order equivalence of $\prec_{H1}^{CO} = \prec_{H2}^{CO}$, we get that $c_k <_{H2} c_i$. Combining this with \eqnref{wr}, we have that $c_k <_{H2} c_i <_{H2} r_j$. This implies that $c_k$ can not be $r_j$'s \lastw{} in $H2$ which is a contradiction.
\end{itemize}
Thus in all the cases, we get that $c_k$ can not be $r_j$'s
\lastw. This implies that $H2$ is legal. 
}
\end{proof}
From the above lemma we get the following interesting corollary.

\begin{corollary}
\label{cor:coop-legal}
Every \coop{} history $H$ is \legal{} as well.
\end{corollary}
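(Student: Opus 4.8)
The plan is to obtain the claim as a one-line consequence of Lemmas~\ref{lem:ap-co-eq} and~\ref{lem:ap-eqv-legal}. Suppose $H$ is \coop. By \defref{coop1} there is a t-sequential \legal{} history $S$ that is equivalent to $\overline{H}$ and respects $\prec_H^{RT}$ and $\prec_H^{CO}$. I would first record the (already noted) fact that completing a history changes neither the set of transactions nor the union of the real-time and conflict orders, so $\prec_H^{CO}=\prec_{\overline{H}}^{CO}$; hence $S$ respects $\prec_{\overline{H}}^{CO}$, i.e., $\prec_{\overline{H}}^{CO}\subseteq\prec_S^{CO}$.

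Next I would apply Lemma~\ref{lem:ap-co-eq} to the equivalent histories $\overline{H}$ and $S$: since $\prec_{\overline{H}}^{CO}\subseteq\prec_S^{CO}$, in fact $\prec_{\overline{H}}^{CO}=\prec_S^{CO}$. Now Lemma~\ref{lem:ap-eqv-legal}, applied to $\overline{H}$ and $S$ (equivalent histories with equal conflict orders) together with the hypothesis that $S$ is \legal, gives that $\overline{H}$ is \legal.

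Finally I would descend from $\overline{H}$ back to $H$: $\overline{H}$ is obtained from $H$ by inserting only \trya{} events, and such events are not commit events, so every successful read of $H$ has the same \lastw{} in $H$ and in $\overline{H}$; therefore \legality{} of $\overline{H}$ yields \legality{} of $H$. There is no real obstacle here — the corollary is deliberately a short consequence of the two preceding lemmas — and the only points needing a little care are the orientation of ``respects'' (a history that respects an order is a superset of it, so the containment direction in Lemma~\ref{lem:ap-co-eq} is used with $H1=\overline{H}$, $H2=S$) and the last transfer of \legality{} between $H$ and $\overline{H}$.
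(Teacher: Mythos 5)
Your proof is correct and takes essentially the same route as the paper, which states the corollary as an immediate consequence of Lemma~\ref{lem:ap-eqv-legal} (with Lemma~\ref{lem:ap-co-eq} upgrading the containment of conflict orders given by Definition~\ref{def:coop1} to equality). The points you spell out---that $\prec_H^{CO}=\prec_{\overline{H}}^{CO}$ and that legality transfers from $\overline{H}$ back to $H$ because completion inserts only abort events---are exactly the steps the paper leaves implicit.
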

Based on the conflict graph construction,
we have the following graph characterization for \coop.

\begin{theorem}
\label{thm:ap-graph}
A \legal{} history $H$ is \coop{} iff $\cg{H}$ is acyclic. 
\end{theorem}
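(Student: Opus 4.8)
The plan is to prove the two implications separately, in both cases through the correspondence between the t-sequential history $S$ demanded by \coopty{} and a linear (topological) order of the vertices of $\cg{H}$. Throughout I use that $H$ is \legal{} (hence \valid{}, so the first clause of \coopty{} is free) and that the conflict order is insensitive to completion: since passing from $H$ to $\overline{H}$ only appends abort events, this creates no new successful operation and no new commit event, so $\prec_H^{CO}=\prec_{\overline{H}}^{CO}$, and a read's \lastw{} in $H$ is still its \lastw{} in $\overline{H}$; in particular $H$ is \legal{} iff $\overline{H}$ is.

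($\Rightarrow$) Suppose $H$ is \coop, witnessed by a t-sequential \legal{} history $S$ that is equivalent to $\overline{H}$ and respects $\prec_H^{RT}$ and $\prec_H^{CO}$. Because $S$ is t-sequential, the order in which transactions occur in $S$ is a total order on $\txns(H)$. Every edge of $\cg{H}$ is of the form $(T_i,T_j)$ with $T_i\prec_H^{RT}T_j$ or $T_i\prec_H^{CO}T_j$; in either case, since $S$ respects that relation and is t-sequential, $S$ places all events of $T_i$ before all events of $T_j$, so $T_i$ precedes $T_j$ in the total order. Thus this total order is a topological order of $\cg{H}$, and $\cg{H}$ is acyclic.

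($\Leftarrow$) Suppose $H$ is \legal{} and $\cg{H}$ is acyclic, and fix a topological order $T_{i_1},\dots,T_{i_k}$ of $\cg{H}$ (its vertex set is $\txns(H)=\txns(\overline{H})$). Build $S$ by concatenating, for $\ell=1,\dots,k$, the events of $T_{i_\ell}$ in the order they appear in $\overline{H}$. Then $S$ is t-sequential and equivalent to $\overline{H}$. Whenever $T_i\prec_H^{RT}T_j$ (resp. $T_i\prec_H^{CO}T_j$), $(T_i,T_j)$ is an edge of $\cg{H}$, so $T_i$ precedes $T_j$ in the topological order, and since each transaction's events are contiguous in $S$ the corresponding real-time (resp. conflict) relation reappears in $S$; hence $S$ respects $\prec_H^{RT}$, and $\prec_H^{CO}\subseteq\prec_S^{CO}$. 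Since $\prec_{\overline{H}}^{CO}=\prec_H^{CO}\subseteq\prec_S^{CO}$ and $\overline{H}$ is equivalent to $S$, \lemref{ap-co-eq} gives $\prec_{\overline{H}}^{CO}=\prec_S^{CO}$, and then \lemref{ap-eqv-legal} gives that $S$ is \legal{} (as $\overline{H}$ is). Hence $S$ is a t-sequential \legal{} history equivalent to $\overline{H}$ respecting $\prec_H^{RT}$ and $\prec_H^{CO}$, i.e., $H$ is \coop.

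The main obstacle is the \legality{} of the reordered history $S$ in the converse direction: a priori, permuting transactions could change which committed write a read observes. The resolution is that $S$ preserves the conflict order of $\overline{H}$, which already fixes, for every successful read, the position of its \lastw{} relative to every other committed writer of the same t-object; \lemref{ap-co-eq} together with \lemref{ap-eqv-legal} packages precisely this reasoning, so the only extra care needed is the bookkeeping that conflict order and \legality{} do not distinguish $H$ from $\overline{H}$. (If one prefers a self-contained argument: were $c_\ell$ the \lastw{} of some $r_j(x,v)$ in $S$ with $c_i\ne c_\ell$ its \lastw{} in $H$, then the w-w conflict on $x$ between $T_i$ and $T_\ell$ and the w-r conflict between $T_\ell$ and $T_j$, oriented by the topological order, force $c_i<_H c_\ell<_H r_j(x,v)$, contradicting the choice of $c_i$.)
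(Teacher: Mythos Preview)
Your proposal is correct and follows essentially the same route as the paper: in the forward direction you extract a linear order from the witnessing $S$ and argue it is a topological sort of $\cg{H}$ (the paper phrases this as $\cg{H}$ being a subgraph of the acyclic $\cg{S}$), and in the converse you topologically sort $\cg{H}$, concatenate the transactions, and then transfer \legality{} from $\overline{H}$ to $S$ via \lemref{ap-eqv-legal}. The only cosmetic difference is that you invoke \lemref{ap-co-eq} explicitly to upgrade $\prec_H^{CO}\subseteq\prec_S^{CO}$ to equality, whereas the paper argues the reverse inclusion directly; your extra bookkeeping about $H$ versus $\overline{H}$ and the optional self-contained \legality{} argument are welcome but not substantively new.
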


\begin{proof}

\noindent
\textit{(Only if)} If $H$ is \coop{} and legal, then
$\cg{H}$ is acyclic:
Since $H$ is \coop{}, there exists a legal t-sequential
history $S$ equivalent to $\overline{H}$ and $S$ respects
$\prec_{H}^{RT}$ and $\prec_{H}^{CO}$. Thus from the conflict graph
construction we have that $\cg{\overline{H}} (= \cg{H})$ is a sub
graph of $\cg{S}$. Since $S$ is sequential, it can be inferred that
$\cg{S}$ is acyclic. Any sub graph of an acyclic graph is also
acyclic. Hence $\cg{H}$ is also acyclic. 

\vspace{1mm}
\noindent
\textit{(if)} If $H$ is \legal{} and $\cg{H}$ is acyclic then $H$ is
\coop: 
Suppose that $\cg{H}=\cg{\overline{H}}$ is acyclic. Thus we can
perform a topological sort on the vertices of the graph and obtain a
sequential order. Using this order, we can obtain a sequential
schedule $S$ that is equivalent to $\overline{H}$.
Moreover, by construction, $S$ respects $\prec_{H}^{RT} =
\prec_{\overline{H}}^{RT}$ 
and $\prec_{H}^{CO} = \prec_{\overline{H}}^{CO}$. 

Since every two events related by the conflict relation (w-w, r-w, or
w-r)in $S$ are also related by $\prec_{\overline{H}}^{CO}$, we obtain 
$\prec_{S}^{CO} = \prec_{\overline{H}}^{CO}$. 
Since $H$ is legal, $\overline{H}$ is also legal. 
Combining this with \lemref{ap-eqv-legal}, we get that $S$ is also
\legal. 
This satisfies all the conditions necessary for $H$ to be \coop. \qed
\end{proof}

\subsection{Proofs of local correctness and non-interference}
\label{subsec:ap-local}

\begin{theorem}
\label{lem:ap-perm_ni}
For every local correctness property $P$, 
$\permfn{P}\subseteq \nifn{P}$.
\end{theorem}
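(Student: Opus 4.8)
The plan is to argue by contradiction, along exactly the lines of \thmref{perm_ni}, of which this is the appendix restatement. Suppose, for contradiction, that some history $H$ belongs to $\permfn{P}$ but not to $\nifn{P}$. Since $\permfn{P}\subseteq P$, condition~(1) of \defref{ni} already holds for $H$, so it must fail condition~(2): there exist an aborted transaction $T_a\in\aborted(H)$, a set $R\subseteq\prevA{T_a}{H}$, and a history $\widetilde H\in\H^{T_a,C}_{-R}$ with $\widetilde H\in P$. The first thing I would record is the structural fact that $\prevA{T_a}{H}\subseteq\aborted(H)\cup\id{incomplete}(H)$; hence $R$ contains \emph{no committed transaction}, and moreover $T_a\notin R$ (a transaction neither aborts before its own terminal operation nor is incomplete at the moment it aborts). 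The goal is then to produce a history in $\H^{T_a,C}\cap P$, which contradicts condition~(2) of \defref{perm} for the $P$-permissive history $H$.

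Next I would build the witness. Let $\widehat H$ be obtained from $H^{T_a}$ by replacing the final (aborting) operation of $T_a$ with the last operation of $T_a$ in $\widetilde H$ --- but, in contrast to $\widetilde H$, \emph{without} deleting the transactions of $R$. By construction $\widehat H\in\H^{T_a,C}$, so it only remains to show $\widehat H\in P$, and for this I would invoke locality of $P$: it suffices to verify $\subs{T_i}{\widehat H}\in P$ for every $T_i\in\txns(\widehat H)$. For every transaction $T_i\neq T_a$ --- each of which completes in $H$ before the modified event of $T_a$ (committed ones because they commit before $T_a$'s last event; the others because they lie in $\prevA{T_a}{H}$) --- the prefix $\widehat H^{T_i}$ coincides with $H^{T_i}$, so $\subs{T_i}{\widehat H}=\subs{T_i}{H}$, which is in $P$ because $H\in P$ and $P$ is local. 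For $T_i=T_a$, since $R$ contains no committed transaction and $T_a\notin R$, deleting $R$ affects neither the committed transactions of $H^{T_a}$ nor the events of $T_a$; hence restricting $\widehat H$ and restricting $\widetilde H$ to $\comm(H^{T_a})\cup\{T_a\}$ yields the same sequence, i.e. $\subs{T_a}{\widehat H}=\subs{T_a}{\widetilde H}$, and the latter is in $P$ because $\widetilde H\in P$ and $P$ is local. Applying the ``if'' direction of locality then gives $\widehat H\in P$, so $\widehat H\in\H^{T_a,C}\cap P$, contradicting $H\in\permfn{P}$.

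I expect the only genuinely non-routine point to be the case $T_i=T_a$, i.e. the identity $\subs{T_a}{\widehat H}=\subs{T_a}{\widetilde H}$: one must notice that the set $R$ that was removed in order to rescue $T_a$ consists, by the definition of $\prevA{\cdot}{\cdot}$, entirely of aborted or incomplete transactions and does not contain $T_a$, so $R$ was already invisible to $T_a$'s local sub-history, and hence its removal can be undone without changing $\subs{T_a}{\widehat H}$. This is precisely where locality buys something that a global criterion cannot: for opacity the transactions of $R$ could legitimately occupy positions in the single global serialization and block $T_a$, so the construction above breaks down --- consistent with the earlier observation (\figref{ex1}) that $\nifn{\id{opacity}}\neq\permfn{\id{opacity}}$. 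Everything else --- the equalities $\widehat H^{T_i}=H^{T_i}$, the two substitutions into $\subs{\cdot}{\cdot}$, and the (tacit) treatment of incomplete transactions as aborted when forming local sub-histories --- is a direct unwinding of the definitions of $H^{T_i}$, $\subs{\cdot}{\cdot}$, $\applfn{T_i}{H}$, and $\H^{T_a,C}_{-R}$.
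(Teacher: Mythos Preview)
Your proposal is correct and follows essentially the same route as the paper's own proof: assume $H\in\permfn{P}\setminus\nifn{P}$, take the witnessing $T_a$, $R$, $\widetilde H$, build $\widehat H$ from $H^{T_a}$ by replacing $T_a$'s aborting operation with its last operation in $\widetilde H$, and then use locality in both directions to conclude $\widehat H\in\H^{T_a,C}\cap P$. Your write-up is in fact slightly more explicit than the paper's at the crux, namely the identity $\subs{T_a}{\widehat H}=\subs{T_a}{\widetilde H}$: you spell out that $R\subseteq\aborted(H)\cup\id{incomplete}(H)$ and $T_a\notin R$, so removing $R$ touches neither the committed transactions in $H^{T_a}$ nor $T_a$'s own events, which is exactly why the two local sub-histories coincide.
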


\begin{proof}
We proceed by contradiction. Assume that $H$ is in
$\permfn{P}$ but not in $\nifn{P}$. 
More precisely, let $T_a$ be an aborted transaction in $H$, 
$R \subseteq \prevA{T_a}{H}$ and  
$\widetilde H \in \H^{T_a,C}_{-R}$, such that  $\widetilde H \in P$.

On the other hand, since $H\in \permfn{P}$, we have  
$\H^{T_a,C} \cap P = \emptyset$.
Since $P$ is local and $H\in P$, we have $\forall T_i\in\txns(P)$, 
 $\subs{T_i}{H}\in P$. 
Thus, for all transactions $T_i$ that
 committed before the last event of $T_a$, we have  
$\subs{T_i}{H}=\subs{T_i}{H^{T_a}}\in P$.    

Now we construct $\widehat H$ as $H^{T_a}$, except that the aborted
operation of $T_a$ is replaced with the last operation of $T_a$ in
$\widetilde H$.  Since $\widetilde H$ is in $P$, and $P$ is local, 
we have $\subs{T_a}{\widehat H} =
\subs{T_a}{\widetilde H}\in P$. 
For all transactions $T_i$ that committed before the last event of $T_a$ in
$\widehat H$, we have  $\subs{T_i}{\widehat H}=\subs{T_i}{H^{T_a}}\in P$.
Hence, since $P$ is local, we have $\widehat H\in P$.
But, by construction, $\widehat H\in\H^{T_a,C}$---a contradiction with
the assumption that $\H^{T_a,C} \cap P = \emptyset$.
\qed
\end{proof}

\subsection{Proof for Garbage Collection}
\label{subsec:ap-garbage}

\begin{lemma}
\label{lem:ap-trimmed}   
$H_{ig}$ is in {\clo} if and only if $\id{obs}(H_{ig},U)$ and $\id{trim}(H_{ig},U)$ are in {\clo}. 
\end{lemma}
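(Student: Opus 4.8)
The plan is to reduce the statement to the graph characterization of \coop{} (\thmref{graph}) together with the fact that every \coop{} history is \legal{} (\corref{coop-legal}). As in the proof of \thmref{gen-clo}, for a local history of the form under consideration, membership in \clo{} is tantamount to the history being \legal{} and having an acyclic conflict graph (taking $T_i$ itself as the last transaction makes $\subs{T_i}{\cdot}$ the whole history, and a trailing $\tryc(A)$ is irrelevant to both legality and the conflict graph). So it suffices to show: $H_{ig}$ is \legal{} and $\cg{H_{ig}}$ is acyclic if and only if each of $\id{obs}(H_{ig},U)$ and $\id{trim}(H_{ig},U)$ is \legal{} with an acyclic conflict graph. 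Throughout I will use that $U$ consists only of \obsolete{} transactions, that $\id{obs}(H_{ig},U)$ is a \emph{prefix} of $H_{ig}$, and that $\id{trim}(H_{ig},U)$ is obtained from $H_{ig}$ by deleting the events of \obsolete{} transactions except, for each \tobj{}, the write and commit of the last \obsolete{} transaction to commit a value on it.

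For the ``only if'' direction, assume $H_{ig}\in\clo$, so $H_{ig}$ is \legal{} with acyclic $\cg{H_{ig}}$. Legality and acyclicity transfer to the prefix $\id{obs}(H_{ig},U)$ immediately, since nothing before a surviving read is removed and $\cg{\id{obs}(H_{ig},U)}$ is a subgraph of $\cg{H_{ig}}$. The only point needing care is legality of $\id{trim}(H_{ig},U)$: take any read $r_k(x,v)$ surviving in $\id{trim}(H_{ig},U)$ — necessarily $T_k\notin U$ — and show its \lastw{} $c_m$ in $H_{ig}$ survives the trimming. If it does, it remains its \lastw. If not, then $T_m$ is \obsolete{} but not the last \obsolete{} writer of $x$, so some $T_s\in U$ writing $x$ has $c_m<_{H_{ig}}c_s$, hence $c_s$ appears after $r_k(x,v)$ (as $c_m$ is the \lastw); but $T_s$ is \obsolete{}, so $\id{trim}(H_{ig},U)$ contains no read before $c_s$ — contradicting the presence of $r_k(x,v)$. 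Thus $\id{trim}(H_{ig},U)$ is \legal, is a legal sub-sequence of $H_{ig}$, and its conflict graph is a subgraph of $\cg{H_{ig}}$, hence acyclic.

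For the ``if'' direction, legality of $H_{ig}$ follows by reversing the bookkeeping above: reads of $H_{ig}$ appearing in $\id{obs}(H_{ig},U)$ inherit legality, and reads surviving in $\id{trim}(H_{ig},U)$ keep the same \lastw{} in $H_{ig}$. The crux is acyclicity of $\cg{H_{ig}}$. View $H_{ig}$ as $\id{trim}(H_{ig},U)$ with the (read) events of the transactions in $U$ re-inserted consistently with the prefix $\id{obs}(H_{ig},U)$, so that $\cg{H_{ig}}$ is $\cg{\id{trim}(H_{ig},U)}$ together with additional edges incident to the vertices of $U$. Suppose $\cg{H_{ig}}$ has a cycle; since $\cg{\id{trim}(H_{ig},U)}$ is acyclic, write it as $T_{i_1},\dots,T_{i_k}$ with $T_{i_1}=T_{i_k}\in U$. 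Since $\cg{\id{obs}(H_{ig},U)}$ is acyclic, some edge of the cycle is absent from it; let $(T_{i_{j-1}},T_{i_j})$ be the \emph{last} such edge. First rule out $j=k$: $T_{i_{k-1}}$ precedes or overlaps $T_{i_k}\in U$, so by the construction of $\id{obs}(H_{ig},U)$ it is already committed there, whence $(T_{i_{k-1}},T_{i_k})$ does lie in $\cg{\id{obs}(H_{ig},U)}$ — a contradiction. Hence $j<k$, so $T_{i_j},T_{i_{j+1}},\dots,T_{i_k}$ all appear in $\id{obs}(H_{ig},U)$; but since $(T_{i_{j-1}},T_{i_j})\notin\cg{\id{obs}(H_{ig},U)}$, transaction $T_{i_j}$ is incomplete in $\id{obs}(H_{ig},U)$, and by construction no transaction incomplete there begins before the \obsolete{} transaction $T_{i_k}$ completes, so $T_{i_k}$ precedes $T_{i_j}$ in real-time order and $(T_{i_k},T_{i_j})\in\cg{\id{obs}(H_{ig},U)}$. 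Then $T_{i_k},T_{i_j},T_{i_{j+1}},\dots,T_{i_k}$ is a cycle in $\cg{\id{obs}(H_{ig},U)}$ — a contradiction. Thus $\cg{H_{ig}}$ is acyclic, and by \thmref{graph} we get $H_{ig}\in\clo$.

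The step I expect to be the main obstacle is precisely this acyclicity argument in the ``if'' direction: one must locate where a hypothetical cycle escapes both $\id{trim}(H_{ig},U)$ and $\id{obs}(H_{ig},U)$, and then exploit the defining property of \obsolete{} transactions — that every transaction concurrent with them has already terminated — to force the offending edge to be a real-time edge that is, in fact, present in $\cg{\id{obs}(H_{ig},U)}$. The legality bookkeeping in both directions and the ``only if'' direction are routine by comparison.
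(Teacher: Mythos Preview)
Your proposal is correct and follows essentially the same approach as the paper's own proof: both directions reduce to legality plus acyclicity via \thmref{graph} and \corref{coop-legal}, the legality of $\id{trim}(H_{ig},U)$ is handled by the same {\lastw}-survives argument, and the acyclicity of $\cg{H_{ig}}$ in the ``if'' direction uses the identical cycle-tracing argument (write the cycle through a vertex of $U$, take the last edge missing from $\cg{\id{obs}(H_{ig},U)}$, rule out $j=k$, and exploit obsoleteness to produce a real-time edge closing a cycle inside $\cg{\id{obs}(H_{ig},U)}$). Your identification of the acyclicity step as the crux is also exactly where the paper spends its effort.
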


\begin{proof}
(Only if)
Suppose that $H_{ig}$ is in {\clo}.
By \corref{coop-legal}, $H_{ig}$ is legal.
Since $\id{obs}(H_{ig},U)$ is a prefix of $H_{ig}$, it is also legal,
and its conflict
graph is a sub-graph of $CG(H_{ig}$. By Theorem~\ref{thm:graph}, 
$CG(\id{obs}(H_{ig},U))$ is acyclic and, thus, $\id{obs}(H_{ig},U)$ is
in {\clo}.

Further, let $r_k(x,v)$ be any read operation in
$\id{trim}(H_{ig},U)$.
Since $H_{ig}$ is legal,  $r_k(x,v)$ is also legal.
Note that since no read operation of {\obsolete} transactions in
$H_{ig}$ appears in $\id{trim}(H_{ig},U)$, $T_k$ is not in
$U$.  
Let $c_m$ be $r_k(x,v)$ 's \textit{\lastw{}} in $H_{ig}$. 
If $c_m$ appears in  $\id{trim}(H_{ig},U)$, then 
$c_m$ is also $r_k(x,v)$ 's \textit{\lastw{}} in
$\id{trim}(H_{ig},U)$, and, thus, $r_k(x,v)$ is also legal.    
Now, suppose, by contradiction, that $c_m$ does not appear in
$\id{trim}(H_{ig},U)$, i.e., $c_m$ is not the last ({\obsolete})
transaction in $U$ to commit a value on $x$, i.e., there exists a
transaction $T_s\in U$ writing to $x$ such that $c_s$ appears after
$c_m$  in $H_{ig}$.
Since $c_m$ is $r_k(x,v)$ 's \textit{\lastw{}} in $H_{ig}$, $c_s$
appears after $r_k(x,v)$ in $H_{ig}$. 
But $T_s$ is {\obsolete}, and, thus, no read operation 
can appear before $c_s$ in $\id{trim}(H_{ig},U)$---a contradiction.   
Thus, $c_m$ is $r_k(x,v)$ 's \textit{\lastw{}} in
$\id{trim}(H_{ig},U)$, and, hence, $\id{trim}(H_{ig},U)$ is legal.

Since  $\id{trim}(H_{ig},U)$ is a legal sub-sequence of $H_{ig}$,
$CG(\id{trim}(H_{ig},U))$ is a sub-graph of $CG(H_{ig})$ and, by Theorem~\ref{thm:graph}, 
$CG(\id{trim}(H_{ig},U))$ is acyclic and $\id{trim}(H_{ig},U)$ is in {\clo}.

\noindent
(If) Suppose now that $\id{obs}(H_{ig},U)$ and $\id{trim}(H_{ig},U)$ are in {\clo}. 
By  \corref{coop-legal}, both histories are legal, and, by
Theorem~\ref{thm:graph}, produce acyclic conflict graphs.
Immediately, every read operation in $H_{ig}$ that also appears in
$\id{obs}(H_{ig},U)$ is legal.
By the arguments above, the {\lastw} for every read operation in
$\id{trim}(H_{ig},U)$ is also its {\lastw} in $H_{ig}$.
Thus, $H_{ig}$ is legal.

Recall that $H_{ig}$ can be represented as $\id{trim}(H_{ig},U)$ with
read events of transactions in $U$ inserted in accordance to its prefix  $\id{obs}(H_{ig},U)$.
Thus, 
$CG(H_{ig})$ can be represented as $CG(\id{trim}(H_{ig},U))$ 
with several additional edges directed to and from transactions in
$U$.

Suppose, by contradiction that $CG(H_{ig})$ contains a cycle $C$. 
Since $CG(\id{trim}(H_{ig},U))$ is acyclic, $C$ must contain an edge
directed to or from a transaction in $U$ that does not appear in $CG(\id{trim}(H_{ig},U))$.   

Thus, we can represent the cycle $C$ as
$T_{i_1},T_{i_2},\ldots,T_{i_k}$, where $T_{i_1}=T_{i_k}\in U$ and for all
$j=1,\ldots,k-1$, $(T_{i_j},T_{i_{j+1}})\in CG(H_{ig})$. 
Since  $CG(\id{obs}(H_{ig},U))$ is acyclic $C$ must contain an edge
that does not appear in $CG(\id{obs}(H_{ig},U))$. 
Let $T_{i_j}$ be the latest transaction in
$T_{i_2},\ldots,T_{i_k}$ such that $(T_{i_{j-1}},T_{i_j})\notin
CG(\id{obs}(H_{ig},U))$.

Note that $j\neq k$. This is because $T_{i_{k-1}}$
must precede or be concurrent to $T_{i_{k}}$ in $H_{ig}$. 
Since $T_{i_{k}}\in U$, by the construction of $\id{obs}(H_{ig},U)$,
$T_{i_{k-1}}$ must have committed in $\id{obs}(H_{ig},U)$.
But then $(T_{i_{j-1}},T_{i_j})\in CG(\id{obs}(H_{ig},U))$---a contradiction.

Now, since $(T_{i_{j-1}},T_{i_j})\notin
CG(\id{obs}(H_{ig},U))$, $T_{i_j}$ cannot be complete in
$\id{obs}(H_{ig},U)$.   
Again, by the construction of $\id{obs}(H_{ig},U)$, no transaction
that is not complete in $\id{obs}(H_{ig},U)$ can begin before
$T_{i_k}\in U$ completes. 
Hence, $T_{i_k}$ precedes $T_{i_j}$ in the real-time order and, since
$T_{i_j}$ also appears in $\id{obs}(H_{ig},U)$ , $(T_{i_{k}},T_{i_j})\in
CG(\id{obs}(H_{ig},U))$. 
Thus, $CG(\id{obs}(H_{ig},U))$ contains a cycle $T_{i_{k}},T_{i_j},
T_{i_{j+1}}\ldots,T_{i_k}$---a contradiction. 

Thus, $CG(H_{ig})$ is acyclic and, by Theorem~\ref{thm:graph}, $H_{ig}$
is in {\clo}. \qed
\end{proof}


%
\end{document}